\begin{document}
\newcommand{\hrho}{\widehat{\rho}}
\newcommand{\homega}{\widehat{\omega}}
\newcommand{\hI}{\widehat{I}}
\newcommand*{\spr}[2]{\langle #1 | #2 \rangle}
\newcommand*{\bbN}{\mathbb{N}}
\newcommand*{\bbR}{\mathbb{R}}
\newcommand*{\cB}{\mathcal{B}}
\newcommand*{\eps}{\varepsilon}
\newcommand*{\id}{I}
\newcommand{\orho}{\overline{\rho}}
\newcommand{\omu}{\overline{\mu}}
\newcommand*{\half}{{\frac{1}{2}}}
\newcommand*{\ket}[1]{| #1 \rangle}
\newcommand{\trho}{{\widetilde{\rho}}_n^\gamma}
\newcommand*{\bra}[1]{\langle #1 |}
\newcommand*{\proj}[1]{\ket{#1}\bra{#1}}
\newcommand{\otrho}{{\widetilde{\rho}}_n^{\gamma 0}}
\newcommand{\be}{\begin{equation}}
\newcommand{\bea}{\begin{eqnarray}}
\newcommand{\eea}{\end{eqnarray}}
\newcommand{\tr}{\mathrm{Tr}}
\newcommand*{\Hmin}{H_{\min}}
\newcommand{\rank}{\mathrm{rank}}
\newcommand{\tends}{\rightarrow}
\newcommand{\uS}{\underline{S}}
\newcommand{\oS}{\overline{S}}
\newcommand{\ee}{\end{equation}}
\newcommand{\n}{{(n)}}
\newtheorem{definition}{Definition}
\newtheorem{theorem}{Theorem}
\newtheorem{proposition}{Proposition}
\newtheorem{lemma}{Lemma}
\newtheorem{defn}{Definition}
\newtheorem{corollary}{Corollary}
\newcommand{\qed}{\hspace*{\fill}\rule{2.5mm}{2.5mm}}

\newenvironment{proof}{\noindent{\it Proof}\hspace*{1ex}}{\qed\medskip}
%\
\def\reff#1{(\ref{#1})}
%

% Use the \preprint command to place your local institutional report
% number in the upper righthand corner of the title page in preprint
% mode.
% Multiple \preprint commands are allowed.
% Use the 'preprintnumbers' class option to override journal defaults
% to display numbers if necessary
%\preprint{}

\title{Smooth R\'enyi Entropies and the Quantum Information Spectrum}

% repeat the \author .. \affiliation  etc. as needed
% \email, \thanks, \homepage, \altaffiliation all apply to the current
% author. Explanatory text should go in the []'s, actual e-mail
% address or url should go in the {}'s for \email and \homepage.
% Please use the appropriate macro foreach each type of information

% \affiliation command applies to all authors since the last
% \affiliation command. The \affiliation command should follow the
% other information
% \affiliation can be followed by \email, \homepage, \thanks as well.
\author{Nilanjana Datta}
\email{N.Datta@statslab.cam.ac.uk}
\affiliation{Statistical Laboratory, DPMMS, University of Cambridge, Cambridge CB3 0WB, UK}
\author{Renato Renner}
\email{renner@phys.ethz.ch}
%\homepage[]{Your web page}
%\thanks{}
\affiliation{Institute for Theoretical Physics, ETH Zurich, 8093 Zurich, Switzerland}
%Collaboration name if desired (requires use of superscriptaddress
%option in \documentclass). \noaffiliation is required (may also be
%used with the \author command).
%\collaboration can be followed by \email, \homepage, \thanks as well.
%\collaboration{}
%\noaffiliation

\date{\today}

\begin{abstract}
Many of the traditional results in information theory, such as the
channel coding theorem or the source coding theorem, are restricted to
scenarios where the underlying resources are \emph{independent and
identically distributed (i.i.d.)} over a large number of uses. To overcome this
limitation, two different techniques, the \emph{information spectrum
  method} and the \emph{smooth entropy framework}, have been developed
independently. They are based on new entropy measures, called
\emph{spectral entropy rates} and \emph{smooth entropies},
respectively, that generalize Shannon entropy (in the classical case)
and von Neumann entropy (in the more general quantum case). Here, we
show that the two techniques are closely related. More precisely, the
spectral entropy rate can be seen as the asymptotic limit of the
smooth entropy. Our results apply to the quantum setting and thus
include the classical setting as a special case.
\end{abstract}

% insert suggested PACS numbers in braces on next line
\pacs{03.65.Ud, 03.67.Hk, 89.70.+c}
% insert suggested keywords - APS authors don't need to do this
%\keywords{}

%\maketitle must follow title, authors, abstract, \pacs, and \keywords
\maketitle

\section{Introduction}

Traditional results in information theory, e.g., the \emph{noisy
  channel coding theorem} or the \emph{source coding} (or \emph{data
  compression}) \emph{theorem}, typically rely on the assumption that
underlying resources, e.g., information sources and communication
channels, are ``memoryless''. A memoryless information source is one
which emits signals that are independent of each other.  Similarly, a
channel is said to be memoryless if the noise acting on successive
inputs to the channel is uncorrelated.  Such resources can be
described by a sequence of \emph{identical and independently
  distributed (i.i.d.)} random variables.

In reality, however, this
assumption cannot generally be justified. This is particularly
problematic in cryptography, where the accurate modeling of the
system is essential to derive any claim about its security.

In the past decade, two approaches have been proposed independently to
overcome this limitation. The \emph{information spectrum approach} was
introduced by Han and Verd\'u~\cite{hanverdu93,verdu94,han} in an
attempt to generalize the noisy channel coding theorem.  This approach
yields a unifying mathematical framework for obtaining asymptotic rate
formulae for many different operational schemes in information theory,
such as data compression, data transmission, and hypothesis
testing. The power of this method lies in the fact that it does not
rely on the specific nature of the sources or channels involved in the
schemes.

The main ingredients of this method are new entropy-type measures,
called \emph{spectral entropy rates}, which are defined asymptotically
for sequences of probability distributions. They can be seen as
generalizations of the Shannon entropy, and also inherit many of its
properties, such as subadditivity, strong subadditivity, monotonicity,
and Araki-Lieb inequalities. They also satisfy chain rule
inequalities. Their main feature, however, is that they characterize
various other asymptotic information-theoretic quantities, e.g., the
\emph{data compression rate}, without relying on the i.i.d.\
assumption.

Subsequently, Hayashi, Nagaoka, and Ogawa have generalized the
information-spectrum method to quantum-mechanical settings. They have
applied the method to study quantum hypothesis testing and quantum
source coding \cite{ogawa00, nagaoka02}, as well as to determine
general expressions for the optimal rate of entanglement concentration
\cite{hay_conc} and the classical capacity of quantum channels
\cite{hayashi03}. The method has been further extended by Bowen and
Datta~\cite{bd1} and used to obtain general formulae for the optimal
rates of various information-theoretic protocols, e.g., the dense
coding capacity for a noiseless quantum channel, assisted by arbitrary
shared entanglement \cite{bd_rev} and the entanglement cost for
arbitrary sequences of pure \cite{bd_entpure} and mixed
\cite{bd_entmixed} states. Recently, Matsumoto~\cite{Mat07} has also
employed the information spectrum method to obtain an alternative (but
equivalent) expression for the entanglement cost for an arbitrary
sequence of states.

In a simultaneous but independent development, the necessity to
generalize Shannon's theory became apparent in the context of
\emph{cryptography}.  Roughly speaking, one of the main challenges in
cryptography is that one needs to deal with an adversary who might
pursue an \emph{arbitrary} (and unknown) strategy. In particular, the
adversary might introduce undesired correlations which, for instance,
make it difficult to justify assumptions on the independence of noise
in a communication channel.

Bennett, Brassard, Cr\'epeau, and Maurer~\cite{BBCM95} were among the
first to make this point explicit, arguing that the Shannon entropy is
not an appropriate measure for the ignorance of an adversary about a
(partially secret) key. They proposed an alternative measure based on
the collision entropy (i.e., R\'enyi entropy~\cite{Renyi61} of order
$2$) and a notion called \emph{spoiling knowledge}, which can be seen
as a predecessor of smooth entropies.  This approach has been further
investigated by Cachin~\cite{Cachin97}, who also found connections to
other entropy measures, in particular \emph{R\'enyi entropies} of
arbitrary order.

Motivated by the work of Bennett \emph{et al.} and Cachin,
\emph{smooth R\'enyi entropies} have been introduced by Renner et al.,
first for the purely classical case (in~\cite{RenWol04b}), and later
for the more general quantum regime (in~\cite{KR,renatophd}).  In
contrast to the spectral entropy rates, smooth R\'enyi entropies are
defined for single distributions (rather than sequences of
distributions). Because of their \emph{non-asymptotic} nature, they
depend on an additional parameter $\eps$, called \emph{smoothness}.

Similarly to the spectral entropy rates, it has been shown that smooth
entropies have many properties in common with Shannon and von Neumann
entropy (for example, there is a chain rule, and strong subadditivity
holds)~\cite{RenWol05b,renatophd}.  Furthermore, they allow for a
quantitative analysis of a broad variety of information-theoretic
tasks|but in contrast to Shannon entropy, neither the i.i.d.\
assumption nor asymptotics are needed.  For example, in the classical
regime, it is possible to give a fully general formula for the number
of classical bits that can be transmitted reliably (up to some error
$\eps$) in \emph{one} (or finitely many) uses of a classical
channel~\cite{ReWoWu07}. In the quantum regime, they proved very
useful in the context of randomness extraction~\cite{KR,renatophd},
which, in turn, is used for cryptographic
applications~\cite{DFSS05,DFRSS07,DFSS07,STTV07}. In particular, they
are employed for the study of real-world implementations of
cryptographic schemes, where the available resources (e.g., the
computational power or the memory size) are finite~\cite{ScaRen07}.

%
%Traditionally, rates for data compression, channel capacity, and other operational quantities in information theory are related to entropic functions of the state (or distribution).  The underlying assumption is that the source or channel is identical and independently distributed ({i.i.d.}), or memoryless, over many uses, and the rate is determined in the asymptotic limit. However, in real
%world communication systems, the assumption that sources, channels or resources
%of entangled states are memoryless is not necessarily justified.
%...arbitrary sequences....
%...quantum information spectrum approach...
%...rates are given by spectral entropy rates...
% The information spectrum method, derived in classical information theory by Verdu \& Han \cite{verdu94,han}, has been extended into quantum information by Hayashi, Nagaoka, \& Ogawa, initially in terms of quantum hypothesis testing and source coding  \cite{ogawa00, nagaoka02}, and then to
%determine general expressions for entanglement concentration \cite{hay_conc}, and the
%classical capacity of arbitrary quantum channels \cite{hayashi03}.  It was also used to obtain general formulas for the
%entanglement cost (in \cite{bd_entpure} and \cite{bd_entmixed}), and
%the optimal rate for dense coding \cite{bd_rev}. The power of the information spectrum approach comes from the lack of assumptions made about the source (or channel) involved.

%Discussion of smooth entropies and their operational significance...and their relation to the Renyi entropy - to be elaborated.

Our aim in this paper is to find connections between the two different
approaches described above, by exploring the relationships between
spectral entropy rates and smooth entropies.  We do this in two
steps. First, we consider the special case where the entropies are not
conditioned on an additional system, in the following called the
\emph{non-conditional case}. Then, in a second step, we consider the
general \emph{conditional} case where the entropies are conditioned on
an extra system.

\section{Definitions of smooth entropy and spectral entropy rates}

\subsection{Mathematical Preliminaries}

Let ${\cal{B}}({\cal{H}})$ denote the algebra of linear operators
acting on a finite-dimensional Hilbert space ${\cal{H}}$. The von
Neumann entropy of a state $\rho$, i.e., a positive operator of unit
trace in ${\cal{B}}({\cal{H}})$, is given by $S(\rho) = - \tr \rho
\log \rho$. Throughout this paper, we take the logarithm to base $2$
and all Hilbert spaces considered are finite-dimensional.

The quantum information spectrum approach requires the extensive use of spectral projections. Any self-adjoint operator $A$ acting on a finite-dimensional Hilbert space may be written in its spectral
decomposition $A = \sum_i \lambda_i |i\rangle \langle i|$.  We define the
positive spectral projection on $A$ as $\{ A \geq 0 \} := \sum_{\lambda_i \geq 0} |i\rangle \langle i|$, the projector onto the eigenspace of $A$ corresponding to positive eigenvalues.  Corresponding definitions apply for the
other spectral projections $\{ A < 0 \}, \{ A > 0 \}$ and $\{ A \leq 0 \}$. For two operators $A$ and $B$, we can
then define $\{ A \geq B \}$ as $\{ A - B \geq 0 \}$.  The following key lemmas are useful. For a proof of Lemma \ref{lem1}, see \cite{ogawa00,nagaoka02}.
\begin{lemma}
\label{lem1}
For self-adjoint operators $A$, $B$ and any positive operator $0 \leq P \leq I$
the inequality we have
\bea
\mathrm{Tr}\big[ P(A-B)\big] &\leq& \mathrm{Tr}\big[ \big\{ A \geq B \big\}
(A-B)\big]\label{lem11}\\
\mathrm{Tr}\big[ P(A-B)\big] &\geq& \mathrm{Tr}\big[ \big\{ A \leq B \big\}
(A-B)\big].
\label{lem12}
\eea
Identical conditions hold for strict inequalities in the spectral
projections $\{A < B\}$ and $\{ A > B\}$.
\end{lemma}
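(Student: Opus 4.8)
The plan is to reduce both inequalities to a single statement about one self-adjoint operator. Setting $C := A - B$, which is again self-adjoint, the claims \reff{lem11} and \reff{lem12} become $\tr[PC] \leq \tr[\{C \geq 0\}C]$ and $\tr[PC] \geq \tr[\{C \leq 0\}C]$, respectively. First I would fix a spectral decomposition $C = \sum_i \lambda_i \proj{i}$, so that $\{C \geq 0\} = \sum_{\lambda_i \geq 0}\proj{i}$ and hence $\tr[\{C \geq 0\}C] = \sum_{\lambda_i \geq 0}\lambda_i$ is simply the sum of the non-negative eigenvalues of $C$. The crucial observation is that evaluating the trace in this eigenbasis gives $\tr[PC] = \sum_i \lambda_i \langle i | P | i \rangle$, so that only the diagonal matrix elements $p_i := \langle i | P | i \rangle$ of $P$ enter; the off-diagonal part of $P$ is irrelevant to the argument.

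Next I would extract the one fact about $P$ that is actually needed. From $0 \leq P$ we get $p_i = \langle i | P | i \rangle \geq 0$, and from $P \leq \id$ we get $p_i \leq \langle i | i \rangle = 1$, so each $p_i$ lies in $[0,1]$. The first inequality then follows by a term-by-term comparison of $\sum_i \lambda_i p_i$ with $\sum_{\lambda_i \geq 0}\lambda_i$: for $\lambda_i \geq 0$ one has $\lambda_i p_i \leq \lambda_i$ because $p_i \leq 1$, while for $\lambda_i < 0$ one has $\lambda_i p_i \leq 0$ because $p_i \geq 0$; summing these bounds yields $\tr[PC] \leq \sum_{\lambda_i \geq 0}\lambda_i = \tr[\{C \geq 0\}C]$. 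The second inequality is the mirror image: for $\lambda_i > 0$ one has $\lambda_i p_i \geq 0$, and for $\lambda_i \leq 0$ one has $\lambda_i p_i \geq \lambda_i$, since multiplying a non-positive number by $p_i \leq 1$ only increases it, which gives $\tr[PC] \geq \sum_{\lambda_i \leq 0}\lambda_i = \tr[\{C \leq 0\}C]$.

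Finally, the strict-inequality variants involving $\{C > 0\}$ and $\{C < 0\}$ require no new argument: passing from $\geq$ to $>$ merely reassigns the zero eigenvalues, for which $\lambda_i p_i = 0 = \lambda_i$ regardless of $p_i$, so the same term-by-term bounds go through verbatim. I do not anticipate a genuine obstacle here; the only point demanding care is the sign bookkeeping in the two comparisons, in particular remembering that multiplying a negative eigenvalue by $p_i \in [0,1]$ makes it \emph{larger} rather than smaller. The essential structural insight, namely that $\tr[PC]$ depends on $P$ only through its diagonal entries in the eigenbasis of $C$, each of which is confined to $[0,1]$, is what makes the whole argument elementary.
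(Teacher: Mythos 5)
Your proof is correct. Note, however, that the paper itself does not prove Lemma~\ref{lem1} at all: it defers to the references \cite{ogawa00,nagaoka02}, so there is no in-paper argument to compare against. Your eigenbasis computation is the coordinate version of the standard argument found in those references, which splits $C := A-B$ into its positive and negative parts $C = C_+ - C_-$ (with $C_\pm \geq 0$ mutually orthogonal, $\{C \geq 0\}C_+ = C_+$, $\{C\geq 0\}C_- = 0$) and observes
\[
\tr\bigl[\{C \geq 0\}\, C\bigr] - \tr\bigl[P C\bigr]
= \tr\bigl[(I-P)\, C_+\bigr] + \tr\bigl[P\, C_-\bigr] \geq 0 ,
\]
since each term is the trace of a product of positive operators; the second inequality follows symmetrically. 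Your term-by-term comparison of $\lambda_i p_i$ against $\lambda_i$ (for $\lambda_i \geq 0$) and against $0$ (for $\lambda_i < 0$) is precisely this decomposition written out in the eigenbasis of $C$, and both routes isolate the same two facts, $p_i \geq 0$ from $P \geq 0$ and $p_i \leq 1$ from $P \leq I$. The operator form is basis-free and marginally slicker; yours is more elementary and makes explicit the structural point that only the diagonal of $P$ in the eigenbasis of $C$ matters. Your handling of the strict-projection variants, noting that zero eigenvalues contribute $\lambda_i p_i = 0 = \lambda_i$ on both sides so that $\{C > 0\}$ and $\{C \geq 0\}$ give the same trace, is also correct and complete.
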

\begin{lemma}
\label{lem2}
Given a state $\rho_n$ and a self-adjoint
operator $\omega_n$, for any real $\gamma$ we have
$$
\mathrm{Tr}\big[\{\rho_n \ge 2^{-n\gamma}\omega_n \} \omega_n \bigr]
\leq 2^{n\gamma}.
$$
\end{lemma}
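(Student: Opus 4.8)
The plan is to read off the conclusion directly from the defining property of the positive spectral projection, after which only elementary trace estimates remain. Write $A := \rho_n - 2^{-n\gamma}\omega_n$ and let $P := \{\rho_n \ge 2^{-n\gamma}\omega_n\} = \{A \ge 0\}$ be its positive spectral projection. Since $P$ is assembled from the eigenprojections of $A$ belonging to nonnegative eigenvalues, it commutes with $A$, and the compression $PAP = \sum_{\lambda_i \ge 0}\lambda_i\proj{i}$ is a positive operator. This single observation is the substantive content of the lemma; written out, $PAP \ge 0$ is exactly the operator inequality $P\rho_n P \ge 2^{-n\gamma}\, P\omega_n P$.

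From here the argument is routine. I would take the trace of both sides of this inequality and use cyclicity together with $P^2 = P$, which gives $\tr\!\big[P\rho_n\big] \ge 2^{-n\gamma}\,\tr\!\big[P\omega_n\big]$. It then remains to bound the left-hand side from above. Because $0 \le P \le I$ and $\rho_n \ge 0$, one has $\rho_n^{1/2}(\id - P)\rho_n^{1/2} \ge 0$, whence $\tr[P\rho_n] = \tr\!\big[\rho_n^{1/2}P\rho_n^{1/2}\big] \le \tr[\rho_n] = 1$. Combining the two estimates yields $2^{-n\gamma}\,\tr[P\omega_n] \le \tr[P\rho_n] \le 1$, and multiplying through by $2^{n\gamma}$ delivers the claimed bound $\tr[P\omega_n] \le 2^{n\gamma}$.

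The only genuine obstacle is the operator inequality $P\rho_n P \ge 2^{-n\gamma}P\omega_n P$: note that $\omega_n$ is merely self-adjoint, not positive, so $\tr[P\omega_n]$ is a priori unconstrained in sign or size, and the entire bound must be extracted from the fact that on the range of $P$ the state $\rho_n$ dominates $2^{-n\gamma}\omega_n$. Once that domination is secured, the trace manipulations are immediate. I do not expect to need Lemma~\ref{lem1}, though it supplies an alternative route to the key intermediate step: applying \reff{lem11} with $A = \rho_n$, $B = 2^{-n\gamma}\omega_n$, and the trivial choice $P = 0$ gives $0 \le \tr\!\big[\{\rho_n \ge 2^{-n\gamma}\omega_n\}(\rho_n - 2^{-n\gamma}\omega_n)\big]$, which is precisely $\tr[P\rho_n] \ge 2^{-n\gamma}\tr[P\omega_n]$.
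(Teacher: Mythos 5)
Your proof is correct and follows essentially the same route as the paper's: both rest on the nonnegativity $\tr\bigl[\{\rho_n \ge 2^{-n\gamma}\omega_n\}(\rho_n - 2^{-n\gamma}\omega_n)\bigr] \ge 0$ (which you justify explicitly via the spectral decomposition, and the paper asserts directly), followed by the bound $\tr[P\rho_n] \le \tr\rho_n = 1$ and a rearrangement. The extra detail you supply, namely the operator inequality $P\rho_n P \ge 2^{-n\gamma}P\omega_n P$ and the positivity argument for $\tr[P\rho_n]\le 1$, is a careful elaboration of the steps the paper leaves implicit, not a different method.
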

\begin{proof}
Note that
$$
\mathrm{Tr}\big[\{\rho_n \ge 2^{-n\gamma}\omega_n \}
(\rho_n-2^{-n\gamma}\omega_n) \bigr]\ge 0
$$
Hence,
\bea
2^{-n\gamma}\mathrm{Tr}\big[\{\rho_n \ge 2^{-n\gamma}\omega_n \}
\omega_n \bigr]&\le&\mathrm{Tr}\big[\{\rho_n \ge 2^{-n\gamma}\omega_n \}
\rho_n \bigr] \nonumber\\
&\le& \mathrm{Tr} \rho_n = 1
\eea
Therefore,
$$
\mathrm{Tr}\big[\{\rho_n \ge 2^{-n\gamma}\omega_n \} \omega_n \bigr]
\leq 2^{n\gamma}.
$$
\end{proof}

The trace distance between two operators $A$ and $B$ is given by
\be
||A-B||_1 := \tr\bigl[\{A \ge B\}(A-B)\bigr] -
 \tr\bigl[\{A < B\}(A-B)\bigr]
\ee
The fidelity of states $\rho$ and $\rho'$ is defined to be
$$ F(\rho, \rho'):= \tr \sqrt{\rho^{\half} \rho' \rho^{\half}}.
$$
The trace distance between two states $\rho$ and $\rho'$ is
related to the fidelity $ F(\rho, \rho')$ as follows (see (9.110) of \cite{nielsen}):
\be
  \frac{1}{2} \| \rho - \rho' \|_1
\leq
  \sqrt{1-F(\rho, \rho')^2}
\leq
  \sqrt{2(1-F(\rho, \rho'))} \ .
\label{fidelity}
\ee
We also use the following simple corollary of Lemma \ref{lem1}:
\begin{corollary}
\label{cor1}
For self-adjoint operators $A$, $B$ and any positive operator $0 \leq P \leq
I$,
the inequality
$$||A-B||_1 \le \eps,$$
for any $\eps >0$,
implies that $$\tr \bigl[P(A-B) \bigr] \le \eps.$$
\end{corollary}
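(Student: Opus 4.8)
The plan is to combine the first inequality of Lemma~\ref{lem1} with the elementary observation that the subtracted term in the definition of the trace distance is non-positive, so that dropping it can only enlarge the remaining expression and thereby produce the desired upper bound.

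First I would apply inequality \reff{lem11} of Lemma~\ref{lem1} directly, with the given self-adjoint operators $A$, $B$ and the positive operator $P$ satisfying $0 \leq P \leq I$. This immediately yields
\be
\tr\bigl[P(A-B)\bigr] \leq \tr\bigl[\{A \geq B\}(A-B)\bigr].
\ee
Next I would control the right-hand side using the definition of the trace distance,
\be
\|A-B\|_1 = \tr\bigl[\{A \geq B\}(A-B)\bigr] - \tr\bigl[\{A < B\}(A-B)\bigr].
\ee
The key step is to note that $\{A < B\} = \{A - B < 0\}$ projects onto the span of the eigenvectors of $A-B$ with strictly negative eigenvalues, so that $\{A<B\}(A-B)$ is negative semidefinite and hence $\tr\bigl[\{A<B\}(A-B)\bigr] \leq 0$. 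Subtracting this non-positive quantity only increases the right-hand side, which gives $\tr\bigl[\{A \geq B\}(A-B)\bigr] \leq \|A-B\|_1$.

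Chaining the two bounds together with the hypothesis $\|A-B\|_1 \leq \eps$ then produces $\tr\bigl[P(A-B)\bigr] \leq \eps$, as claimed. I do not expect any genuine obstacle here: the entire content reduces to the sign of $\tr\bigl[\{A<B\}(A-B)\bigr]$, which follows at once from the spectral decomposition of $A-B$. The only point warranting a little care is to confirm that the spectral projectors in the two displayed expressions are consistent, i.e.\ that $\{A \geq B\}$ and $\{A < B\}$ are complementary and together sum to the identity, so that in passing from the trace-distance identity to the bound no eigenvalue of $A-B$ is omitted or double-counted.
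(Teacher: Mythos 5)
Your proof is correct and is exactly the intended argument: the paper states this result as a ``simple corollary'' of Lemma~\ref{lem1} without writing out a proof, and the chain $\tr\bigl[P(A-B)\bigr] \le \tr\bigl[\{A \ge B\}(A-B)\bigr] \le \|A-B\|_1 \le \eps$, justified by the non-positivity of $\tr\bigl[\{A<B\}(A-B)\bigr]$ in the trace-distance definition, is precisely that simple argument. Your closing remark about the complementarity of $\{A \ge B\}$ and $\{A < B\}$ is also sound, since these are the spectral projectors of $A-B$ onto its non-negative and strictly negative eigenspaces, respectively.
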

We also use the ``gentle measurement'' lemma \cite{winter99,ogawanagaoka02}.
\begin{lemma}\label{gm} For a state $\rho$ and operator $0\le \Lambda\le I$, if
$\mathrm{Tr}(\rho \Lambda) \ge 1 - \delta$, then
$$||\rho -   {\sqrt{\Lambda}}\rho{\sqrt{\Lambda}}||_1 \le {2\sqrt{\delta}}.$$
The same holds if $\rho$ is only a subnormalized density operator.
\end{lemma}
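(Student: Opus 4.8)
The plan is to reduce the trace-distance bound to a lower bound on the fidelity, and then to extract that bound from Uhlmann's theorem. Writing $\sigma := \sqrt{\Lambda}\rho\sqrt{\Lambda}$, the fidelity relation \reff{fidelity} gives $\half\|\rho-\sigma\|_1 \le \sqrt{1-F(\rho,\sigma)^2}$, so it suffices to establish $F(\rho,\sigma)^2 \ge 1-\delta$; the claimed inequality $\|\rho-\sigma\|_1 \le 2\sqrt{\delta}$ then follows immediately.

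To lower-bound the fidelity I would choose a convenient pair of purifications. Let $\ket{\psi}\in\mathcal{H}\otimes\mathcal{H}$ be any purification of $\rho$. Then $(\sqrt{\Lambda}\otimes\id)\ket{\psi}$ has reduced state $\sqrt{\Lambda}\rho\sqrt{\Lambda}=\sigma$ on the first factor, so after normalization $\ket{\phi} := (\tr[\Lambda\rho])^{-1/2}(\sqrt{\Lambda}\otimes\id)\ket{\psi}$ is a purification of $\sigma$ (using $\tr\sigma=\tr[\Lambda\rho]$). Since Uhlmann's theorem identifies the fidelity with the maximal overlap over all purifications, any fixed pair yields a lower bound, and tracing out the auxiliary factor gives
\be
F(\rho,\sigma) \ge |\spr{\psi}{\phi}| = \frac{\bra{\psi}(\sqrt{\Lambda}\otimes\id)\ket{\psi}}{\sqrt{\tr[\Lambda\rho]}} = \frac{\tr[\sqrt{\Lambda}\rho]}{\sqrt{\tr[\Lambda\rho]}}.
\ee

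The crux of the argument is the elementary operator inequality $\sqrt{\Lambda}\ge\Lambda$, which holds because $0\le\Lambda\le\id$ forces every eigenvalue $\lambda$ into $[0,1]$, where $\sqrt{\lambda}\ge\lambda$. As $\rho\ge 0$, this gives $\tr[\sqrt{\Lambda}\rho]\ge\tr[\Lambda\rho]$, so that
\be
F(\rho,\sigma)^2 \ge \frac{(\tr[\sqrt{\Lambda}\rho])^2}{\tr[\Lambda\rho]} \ge \frac{(\tr[\Lambda\rho])^2}{\tr[\Lambda\rho]} = \tr[\Lambda\rho] \ge 1-\delta.
\ee
Inserting this into \reff{fidelity} yields $\half\|\rho-\sigma\|_1 \le \sqrt{1-F(\rho,\sigma)^2}\le\sqrt{\delta}$, which is the desired statement.

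Finally, for a subnormalized $\rho$ the same chain of inequalities goes through, once the fidelity and its relation \reff{fidelity} to the trace distance are read in the form valid for subnormalized operators; note that no step above invoked $\tr\rho=1$. I expect the only genuine subtlety to be this bookkeeping for the subnormalized case, together with the clean verification that $(\sqrt{\Lambda}\otimes\id)\ket{\psi}$ purifies $\sigma$. The analytic heart of the proof — the single inequality $\sqrt{\Lambda}\ge\Lambda$ — is trivial, which is what makes the overall estimate so short.
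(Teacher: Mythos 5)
Your overall strategy (lower-bound a fidelity, then convert via \reff{fidelity}) is reasonable, and your observation that $\sqrt{\Lambda}\geq\Lambda$ is indeed the analytic heart of the matter; note also that the paper itself does not prove Lemma~\ref{gm} but defers to \cite{winter99,ogawanagaoka02}, so the comparison here is with the standard arguments in those references. However, your Uhlmann step contains a genuine error. The normalized vector $\ket{\phi}=(\tr[\Lambda\rho])^{-1/2}(\sqrt{\Lambda}\otimes\id)\ket{\psi}$ is a purification of the \emph{normalized} state $\sigma/\tr\sigma$, not of $\sigma=\sqrt{\Lambda}\rho\sqrt{\Lambda}$, so Uhlmann's theorem gives $F(\rho,\sigma/\tr\sigma)\geq|\spr{\psi}{\phi}|$ and says nothing about $F(\rho,\sigma)$. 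In fact your intermediate claims are false whenever $\tr[\Lambda\rho]<1$: since $\rho^{\half}\sigma\rho^{\half}=(\rho^{\half}\sqrt{\Lambda}\rho^{\half})^2$ with $\rho^{\half}\sqrt{\Lambda}\rho^{\half}\geq 0$, one has \emph{exactly} $F(\rho,\sigma)=\tr[\sqrt{\Lambda}\rho]$, which is strictly smaller than your claimed lower bound $\tr[\sqrt{\Lambda}\rho]/\sqrt{\tr[\Lambda\rho]}$. Concretely, take $\Lambda$ a projector with $\tr[\Lambda\rho]=1-\delta$; then $F(\rho,\sigma)^2=(1-\delta)^2<1-\delta$, so the inequality $F(\rho,\sigma)^2\geq 1-\delta$ on which your proof rests simply fails. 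What your argument actually establishes is the normalized gentle-measurement bound $\|\rho-\sigma/\tr\sigma\|_1\leq 2\sqrt{\delta}$, which does not give the stated lemma (the triangle inequality only yields $\|\rho-\sigma\|_1\leq 2\sqrt{\delta}+\delta$).

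The gap is fixable, but not merely by bookkeeping. If you use the subnormalized vector $\ket{\psi'}=(\sqrt{\Lambda}\otimes\id)\ket{\psi}$ as a purification of $\sigma$ itself, you get $F(\rho,\sigma)\geq\tr[\sqrt{\Lambda}\rho]\geq 1-\delta$, but plugging $F^2\geq(1-\delta)^2$ into $\half\|\rho-\sigma\|_1\leq\sqrt{1-F^2}$ only yields the weaker constant $2\sqrt{2\delta}$. To recover $2\sqrt{\delta}$ you must exploit the subnormalization of $\sigma$: either compute the trace norm of the rank-two operator $\proj{\psi}-\proj{\psi'}$ exactly, which equals $\sqrt{(1-t)^2+4(t-f^2)}$ with $t=\tr[\Lambda\rho]$ and $f=\tr[\sqrt{\Lambda}\rho]\geq t$, so that it is at most $\sqrt{(1-t)(1+3t)}\leq 2\sqrt{1-t}\leq 2\sqrt{\delta}$, and then invoke monotonicity of the trace distance under the partial trace (as the paper does in the proof of Lemma~\ref{lem:Hminepsaddbound}); or, equivalently, use the strengthened form of \reff{fidelity} valid for subnormalized positive operators, $\|A-B\|_1\leq\sqrt{(\tr A+\tr B)^2-4F(A,B)^2}$. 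This is essentially the argument of \cite{ogawanagaoka02}, which is what achieves the constant $2\sqrt{\delta}$ claimed in the lemma.
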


\subsection{Definition of spectral divergence rates}
In the quantum information spectrum approach one defines spectral divergence
rates, defined below, which can be viewed as generalizations of the quantum relative entropy.
\begin{definition}
Given a sequence of states $\hrho=\{\rho_n\}_{n=1}^\infty$ and a sequence of positive operators
$\homega=\{\omega_n\}_{n=1}^\infty$,
the quantum spectral sup-(inf-)divergence rates are defined in terms
of the difference operators $\Pi_n(\gamma) = \rho_n - 2^{n\gamma}\omega_n$ as
\begin{align}
\overline{D}(\hrho \| \homega) &:= \inf \Big\{ \gamma : \limsup_{n\rightarrow \infty} \mathrm{Tr}\big[ \{ \Pi_n(\gamma) \geq 0 \} \Pi_n(\gamma) \big] = 0 \Big\} \label{od} \\
\underline{D}(\hrho \| \homega) &:= \sup \Big\{ \gamma : \liminf_{n\rightarrow \infty} \mathrm{Tr}\big[ \{ \Pi_n(\gamma) \geq 0 \}
\Pi_n(\gamma) \big] = 1 \Big\} \label{ud}
\end{align}
respectively.
\end{definition}
Although the use of sequences of states
allows for immense freedom in choosing them,
there remain a number of basic properties of the quantum spectral divergence
rates that hold for all sequences. These are stated and proved in
\cite{bd1}.  In the {i.i.d.} case the sequence is
generated from product states $\rho = \{ \varrho^{\otimes n}
\}_{n=1}^{\infty}$, which is used to relate the spectral entropy rates for the
sequence $\rho$ to the entropy of a single state $\varrho$.

Note that the above definitions of the spectral divergence rates
differ slightly from those originally given in (38) and (39) of
\cite{hayashi03}. However, they are equivalent, as stated in the
following two propositions (proved in \cite{bd1}).  The proofs have
been included in the Appendix for completeness.
\begin{proposition}
\label{equiv_sup}
The spectral sup-divergence rate $\overline{D}(\rho\| \omega)$ is equal to
\begin{equation}
\overline{\mathcal{D}}(\rho\| \omega) = \inf \Big\{ \alpha : \limsup_{n\rightarrow
\infty} \mathrm{Tr}\big[ \{ \rho_n \geq e^{n\alpha}\omega_n \} \rho_n \big] = 0
\Big\}
\end{equation}
which is the previously used definition of the spectral sup-divergence rate.
Hence the two definitions are equivalent.
\end{proposition}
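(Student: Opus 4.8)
The plan is to show that the two defining sets of rates coincide up to arbitrarily small gaps. Writing $P_n(\gamma) := \{\rho_n \geq 2^{n\gamma}\omega_n\}$, set
\begin{align}
f_n(\gamma) &:= \tr\big[ P_n(\gamma)\, \Pi_n(\gamma) \big], \\
g_n(\gamma) &:= \tr\big[ P_n(\gamma)\, \rho_n \big],
\end{align}
so that $\overline{D}$ is the infimum of those $\gamma$ with $\limsup_n f_n(\gamma) = 0$, and $\overline{\mathcal{D}}$ the infimum of those $\gamma$ with $\limsup_n g_n(\gamma) = 0$. (I read the exponential base in $\overline{\mathcal{D}}$ as $2$, to match the base-$2$ logarithm used throughout; the two rates are then directly comparable.) The first thing I would record is the operator inequality $\Pi_n(\gamma) = \rho_n - 2^{n\gamma}\omega_n \leq \rho_n$, valid since $\omega_n \geq 0$; together with $P_n(\gamma) \geq 0$ this yields $f_n(\gamma) \leq g_n(\gamma)$, while $f_n(\gamma) \geq 0$ (the projection selects the non-negative part of $\Pi_n(\gamma)$) and $g_n(\gamma) \leq \tr\rho_n = 1$. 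Hence $0 \leq f_n(\gamma) \leq g_n(\gamma) \leq 1$ for every $n$ and $\gamma$.

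The easy inclusion follows at once from this sandwich: if $\limsup_n g_n(\gamma) = 0$ then, since $0 \leq f_n \leq g_n$, also $\limsup_n f_n(\gamma) = 0$. Thus every rate admissible for $\overline{\mathcal{D}}$ is admissible for $\overline{D}$, and taking infima gives $\overline{D} \leq \overline{\mathcal{D}}$.

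The substantive step is the reverse inequality. I would fix any $\gamma$ with $\limsup_n f_n(\gamma) = 0$ and any $\gamma' > \gamma$, and apply inequality~\reff{lem11} of Lemma~\ref{lem1} with $A = \rho_n$, $B = 2^{n\gamma}\omega_n$ and the admissible operator $P = P_n(\gamma')$. This gives $\tr\big[ P_n(\gamma')(\rho_n - 2^{n\gamma}\omega_n) \big] \leq f_n(\gamma)$, and rearranging yields
\be
g_n(\gamma') \leq f_n(\gamma) + 2^{n\gamma}\,\tr\big[ P_n(\gamma')\,\omega_n \big].
\ee
To control the correction term I would invoke Lemma~\ref{lem2}, with its free parameter chosen so that its threshold equals $2^{n\gamma'}$, obtaining $\tr\big[ P_n(\gamma')\,\omega_n \big] \leq 2^{-n\gamma'}$. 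The correction is then at most $2^{-n(\gamma'-\gamma)}$, which tends to $0$ as $n \to \infty$ because $\gamma' > \gamma$. Taking $\limsup_n$ gives $\limsup_n g_n(\gamma') = 0$, so $\overline{\mathcal{D}} \leq \gamma'$; letting $\gamma' \downarrow \gamma$ and then taking the infimum over admissible $\gamma$ yields $\overline{\mathcal{D}} \leq \overline{D}$, which closes the equivalence.

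The step I expect to be the main obstacle is this reverse inequality, and the difficulty is genuinely quantum: the projections $P_n(\gamma)$ and $P_n(\gamma')$ need neither be nested nor commute, so $g_n(\gamma')$ cannot be compared with $f_n(\gamma)$ directly. The variational inequality~\reff{lem11} is precisely what resolves this, since it lets the ``correct'' projection $\{A \geq B\}$ dominate the trace taken against any other operator bounded by $I$; Lemma~\ref{lem2} then supplies the exponential decay that separates the two thresholds $2^{n\gamma}$ and $2^{n\gamma'}$. Everything else is routine manipulation of infima and limits.
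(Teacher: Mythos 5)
Your proof is correct and follows essentially the same route as the paper's: the easy inclusion comes from the sandwich $0 \le f_n(\gamma) \le g_n(\gamma)$, and the converse uses exactly the paper's combination of Lemma~\ref{lem1} (comparing the projection at the larger threshold against the optimizing projection at the smaller threshold) with Lemma~\ref{lem2} to make the cross term decay as $2^{-n(\gamma'-\gamma)}$. The only differences are cosmetic: you argue directly with an arbitrary admissible $\gamma$ and $\gamma' > \gamma$ where the paper fixes $\delta$-separated parameters and derives a contradiction, and your consistent base-$2$ reading quietly repairs the paper's $2^{n\gamma}$ versus $e^{n\alpha}$ mismatch.
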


\begin{proposition}
\label{equiv_inf}
The spectral inf-divergence rate $\underline{D}(\rho\| \omega)$ is equivalent
to
\begin{equation}
\underline{\mathcal{D}}(\rho\| \omega) = \sup \Big\{ \alpha :
\liminf_{n\rightarrow \infty} \mathrm{Tr}\big[ \{ \rho_n \geq e^{n\alpha}\omega_n
\} \rho_n \big] = 1 \Big\}
\end{equation}
which is the previously used definition of the spectral inf-divergence rate.
\end{proposition}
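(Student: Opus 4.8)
The plan is to establish Proposition~\ref{equiv_inf} by mirroring the argument behind Proposition~\ref{equiv_sup}, since the only genuine difference between the two formulations of the inf-divergence rate is that the difference operator $\Pi_n(\gamma)=\rho_n-2^{n\gamma}\omega_n$ appearing inside the trace in \reff{ud} is replaced by $\rho_n$ in $\underline{\mathcal{D}}$ (the passage from the threshold $2^{n\gamma}$ to $e^{n\alpha}$ being a monotone reparametrization of the free exponent, which I would reconcile at the outset and then ignore). Fixing a real $\gamma$, write $P_n=\{\rho_n\ge 2^{n\gamma}\omega_n\}$ and set $a_n(\gamma)=\tr[P_n\Pi_n(\gamma)]$ and $b_n(\gamma)=\tr[P_n\rho_n]$, so that \reff{ud} is governed by $a_n$ and $\underline{\mathcal{D}}$ by $b_n$. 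The easy direction is immediate: since $b_n(\gamma)-a_n(\gamma)=2^{n\gamma}\tr[P_n\omega_n]\ge 0$ and $b_n(\gamma)=\tr[P_n\rho_n]\le\tr\rho_n=1$, we have $a_n(\gamma)\le b_n(\gamma)\le 1$. Hence $\liminf_n a_n(\gamma)=1$ forces $\liminf_n b_n(\gamma)=1$ by squeezing, so the set of admissible $\gamma$ for $\underline{D}$ is contained in that for $\underline{\mathcal{D}}$, giving $\underline{\mathcal{D}}\ge\underline{D}$.

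For the reverse inequality I would prove the single quantitative estimate
\begin{equation}
b_n(\gamma')\le a_n(\gamma)+2^{-n(\gamma'-\gamma)}\qquad(\gamma<\gamma').
\end{equation}
To obtain it, decompose $\rho_n=(\rho_n-2^{n\gamma}\omega_n)+2^{n\gamma}\omega_n$ inside $b_n(\gamma')=\tr[\{\rho_n\ge 2^{n\gamma'}\omega_n\}\rho_n]$. The first piece is bounded by applying Lemma~\ref{lem1}, inequality~\reff{lem11}, with $A=\rho_n$, $B=2^{n\gamma}\omega_n$ and the admissible operator $P=\{\rho_n\ge 2^{n\gamma'}\omega_n\}$, which yields $\tr[\{\rho_n\ge 2^{n\gamma'}\omega_n\}(\rho_n-2^{n\gamma}\omega_n)]\le \tr[\{\rho_n\ge 2^{n\gamma}\omega_n\}(\rho_n-2^{n\gamma}\omega_n)]=a_n(\gamma)$. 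The second piece is the correction term $2^{n\gamma}\tr[\{\rho_n\ge 2^{n\gamma'}\omega_n\}\omega_n]$, which Lemma~\ref{lem2} controls by $2^{n\gamma}\cdot 2^{-n\gamma'}=2^{-n(\gamma'-\gamma)}$; this vanishes as $n\to\infty$ precisely because $\gamma'>\gamma$.

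Granting the estimate, if $\gamma'$ is admissible for $\underline{\mathcal{D}}$, i.e.\ $\liminf_n b_n(\gamma')=1$, then for every $\gamma<\gamma'$ we get $\liminf_n a_n(\gamma)\ge 1$, hence $=1$ since $a_n(\gamma)\le 1$; thus every $\gamma<\gamma'$ is admissible for $\underline{D}$ and so $\underline{D}\ge\gamma'$. Taking the supremum over admissible $\gamma'$ gives $\underline{D}\ge\underline{\mathcal{D}}$, and combined with the easy direction this proves equality. I expect the main obstacle to be the step bounding the first piece: because $\rho_n$ and $\omega_n$ need not commute, the spectral projections $\{\rho_n\ge 2^{n\gamma}\omega_n\}$ for different thresholds do not form a monotone family, so one cannot compare the two trace expressions by elementary projection monotonicity. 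This is exactly the role of Lemma~\ref{lem1}, whose variational content—that $\{A\ge B\}$ maximizes $\tr[P(A-B)]$ over all $0\le P\le I$—sidesteps the non-commutativity and lets the $\gamma'$-threshold quantity be dominated by the $\gamma$-threshold one.
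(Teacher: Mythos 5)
Your proof is correct, and its mathematical core is the same as the paper's appendix proof of Proposition \ref{equiv_inf}: the easy direction drops the positive term $2^{n\gamma}\mathrm{Tr}[P_n\omega_n]$, and the converse rests on the decomposition $\rho_n=(\rho_n-2^{n\gamma}\omega_n)+2^{n\gamma}\omega_n$, bounded via Lemma \ref{lem1} for the difference piece and Lemma \ref{lem2} for the cross term --- exactly the paper's two ingredients. The difference is in the logical packaging, and yours is actually tighter. The paper argues by contradiction: it sets $\alpha=\underline{\mathcal{D}}-\delta$, $\gamma=\underline{\mathcal{D}}-2\delta$ and starts from the assertion $\mathrm{Tr}[\{\rho_n\ge e^{n\alpha}\omega_n\}\rho_n]\to 1$; but this presupposes that every $\alpha$ strictly below the supremum $\underline{\mathcal{D}}$ is itself admissible, i.e.\ downward closedness of the admissible set, which for non-commuting $\rho_n,\omega_n$ is not obvious (as you note, the projections at different thresholds are not nested, so establishing it requires yet another pass through Lemmas \ref{lem1} and \ref{lem2}). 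Your direct version --- prove the uniform estimate $b_n(\gamma')\le a_n(\gamma)+2^{-n(\gamma'-\gamma)}$ for $\gamma<\gamma'$, apply it only to genuinely admissible $\gamma'$, conclude $\underline{D}\ge\gamma'$ for each such $\gamma'$, and take the supremum --- never invokes that closedness property, so it quietly repairs this gap while using the same estimates. Your decision to fix a single base for the exponential threshold at the outset is also the right way to handle a real (if harmless) inconsistency in the paper, whose definitions \reff{od} and \reff{ud} use $2^{n\gamma}$ while the appendix proofs work with $e^{n\alpha}$.
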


Despite these equivalences, it is useful to use the definitions
\reff{od} and \reff{ud} for the divergence rates as they
allow the application of Lemmas \ref{lem1} and \ref{lem2} in deriving
various properties of these rates.

The spectral generalizations of the von Neumann entropy, the conditional
entropy and the mutual information can all be expressed as spectral divergence rates with appropriate
substitutions for the sequence of operators $\homega = \{ \omega_n
\}_{n=1}^{\infty}$.

\subsection{Definition of spectral entropy rates}

Consider a sequence of Hilbert spaces $\{{\cal{H}}_n\}_{n=1}^\infty$, with
${\cal{H}}_n = {\cal{H}}^{\otimes n}$.
For any sequence of states $\hrho=\{\rho_n\}_{n=1}^\infty$, with $\rho_n$ being a density matrix acting in the
Hilbert space ${\cal{H}}_n$, the sup- and inf- spectral entropy rates are defined as follows:
\begin{align}
\overline{S}(\hrho) &= \inf \Big\{ \gamma : \liminf_{n\rightarrow \infty} \mathrm{Tr}\big[ \{ \rho_n \geq 2^{-n\gamma} I_n\} \rho_n \big] = 1 \Big\} \label{os} \\
\underline{S}(\hrho) &= \sup \Big\{ \gamma : \limsup_{n\rightarrow \infty} \mathrm{Tr}\big[ \{ \rho_n \geq 2^{-n\gamma} I_n\} \rho_n\big] = 0 \Big\}.
\label{us}
\end{align}
Here $I_n$ denotes the identity operator acting in ${\cal{H}}_n$.
These are obtainable from the spectral divergence rates as follows [see \cite{bd1}:
\be
\overline{S}(\hrho)= - \underline{D} (\hrho|| \widehat{I}) \,;\,\underline{S}(\hrho)= - \overline{D} (\hrho|| \widehat{I}),
\label{spec}\ee
where $\widehat{I} = \{I_n\}_{n=1}^\infty$ is a sequence of identity operators.

It is known \cite{bd1} that the spectral entropy rates of $\hrho$ are related to the von Neumann entropies of the states $\rho_n$
as follows:
\be
\underline{S}(\hrho) \le \liminf_{n\rightarrow \infty} \frac{1}{n} S(\rho_n) \le \limsup_{n\rightarrow \infty} \frac{1}{n} S(\rho_n) \le \overline{S}(\hrho).
\ee

Moreover for a sequence of states $\hrho=\{\rho^{\otimes n}\}_{n=1}^\infty$:
\be
\underline{S}(\hrho)= \lim_{n\rightarrow \infty} \frac{1}{n} S(\rho_n)= \overline{S}(\hrho).
\ee

% The spectral entropy rates satisfy many interesting properties analogous to those satisfied by the von Neumann entropy,
% e.g. subadditivity, chain rule inequalities, monotonicities etc. \cite{bd1}. In addition, they have important operational
% meanings, being related to optimal rates of various protocols of Quantum
% Information Theory such as
% quantum data compression, transmission of classical information through quantum channels \cite{hayashi03, bd_rev}, dense coding \cite{bd_rev}, entanglement concentration \cite{hay_conc} and entanglement
% dilution \cite{bd_entpure, bd_entmixed}.

For sequences of bipartite states $\hrho = \{\rho_n^{AB}\}_{n=1}^\infty$,
with $\rho_n^{AB} \in {\cal{B}}\left(({\cal{H}}_A \otimes
{\cal{H}}_B)^{\otimes n}\right)$, the conditional spectral entropy rates
are defined as follows:
\bea
\overline{S}(A|B) &:=& -\underline{D}(\hrho^{AB}| \hI^{A}\otimes \hrho^B)
\label{ocond};\\
\underline{S}(A|B) &:=& -\overline{D}(\hrho^{AB}| \hI^{A}\otimes \hrho^B)
.
\label{ucond}
\eea
In the above,
$\hI^{A}=\{I^A_n\}_{n=1}^\infty$ and $\hrho^{A}=\{\rho^A_n\}_{n=1}^\infty$,
with $I^A_n$ being the identity operator acting in in
${\cal{H}}_A^{\otimes n}$
and $\rho^A_n = \mathrm{Tr}_B \rho^{AB}_n$, the partial trace
being taken on the Hilbert space ${\cal{H}}_B^{\otimes n}$.

\subsection{Definition of  min- and max-entropies}

We start with the definition of \emph{non-smooth} min- and max-entropies.

\begin{definition}[\cite{renatophd}]
  The \emph{min-} and \emph{max-entropies} of a bipartite state
  $\rho_{A B}$ relative to a state $\sigma_B$ are defined by
  \[
    H_{\min}(\rho_{A B} | \sigma_B)
  :=
    - \log \min\{ \lambda: \, \rho_{A B} \leq \lambda \cdot \id_A \otimes \rho_B \}
  \]
  and
  \[
    H_{\max}( \rho_{A B} | \sigma_B )
  :=
    \log \tr\bigl(\pi_{A B} (\id_A \otimes \sigma_B)\bigr) \ ,
  \]
  where $\pi_{A B}$ denotes the projector onto the support of $\rho_{A B}$.
\end{definition}

In the special case where the system $B$ is trivial (i.e.,
$1$-dimensional), we simply write $H_{\min}(\rho_A)$ and
$H_{\max}(\rho_A)$. These entropies then correspond to the usual
\emph{non-conditional} R\'enyi entropies of order infinity and zero,
\begin{align*}
  H_{\min}(\rho_A) = H_{\infty}(\rho_A) & = - \log \| \rho_A \|_{\infty} \\
  H_{\max}(\rho_A) = H_0(\rho_A) & = \log \rank(\rho_A) \ ,
\end{align*}
where $\| \cdot \|_{\infty}$ denotes the $L_{\infty}$-norm.

\subsection{Definition of smooth min- and max-entropies}

\emph{Smooth} min- and max-entropies are generalizations of the above
entropy measures, involving an additional \emph{smoothness} parameter
$\eps \geq 0$. For $\eps = 0$, they reduce to the
\emph{non-smooth} quantities.

\begin{definition}[\cite{renatophd}] \label{def:smoothentropies} For
  any $\eps \geq 0$, the \emph{$\eps$-smooth min-} and
  \emph{max-entropies} of a bipartite state $\rho_{A B}$ relative to a
  state $\sigma_B$ are defined by
\[
    H_{\min}^{\eps}(\rho_{A B} | \sigma_B)
  :=
    \sup_{\bar{\rho} \in B^{\eps}} H_{\min}(\bar{\rho} | \sigma_B)
  \]
  and
  \[
    H_{\max}^{\eps}( \rho_{A B} | \sigma_B )
  :=
    \inf_{\bar{\rho} \in B^{\eps}} H_{\max}(\bar{\rho} | \sigma_B)
  \]
  where $B^{\eps}(\rho) := \{\bar{\rho} \geq 0: \, \| \bar{\rho} - \rho
  \|_1 \leq \eps, \tr(\bar{\rho}) \leq \tr(\rho)\}$.
\end{definition}

In the following, we will focus on the smooth min- and max-entropies
for the case where $\sigma_B = \rho_B$. Note that the
quantities~$H_{\min}^\eps(\rho_{A B}|B) := \max_{\sigma_B}
H_{\min}^\eps(\rho_{A B} | \sigma_B)$ and $H_{\max}^\eps(\rho_{A B}|B) :=
\min_{\sigma_B} H_{\max}^{\eps}(\rho_{A B} | \sigma_B)$ defined in~\cite{renatophd}
are not studied in this paper.

\section{Relation between non-conditional entropies}

\subsection{Relation between $\underline{S}(\hrho)$ and $H_{\min}^\eps(\rho)$}

\begin{theorem} \label{thm:relSHmin}

Given a sequence of states $\hrho=\{\rho_n\}_{n=1}^\infty$, where $\rho_n\in {\cal{B}}({\cal{H}}_{n})$,
with ${\cal{H}}_{n} = {\cal{H}}^{\otimes {n}}$, the inf-spectral entropy rate $\uS(\hrho)$
is related to the smooth min-entropy as follows:
\be
\uS(\hrho)= \lim_{\eps \rightarrow 0} \liminf_{n \rightarrow \infty}
\frac{1}{n} H^\eps_{\min}(\rho_n)
\label{main}
\ee
%The Hilbert space ${\cal{H}}$ is assumed to be finite-dimensional.
\end{theorem}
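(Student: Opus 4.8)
The plan is to prove the two inequalities $\uS(\hrho) \le \lim_{\eps\to 0}\liminf_n \frac{1}{n}H^\eps_{\min}(\rho_n)$ and $\uS(\hrho) \ge \lim_{\eps\to 0}\liminf_n \frac{1}{n}H^\eps_{\min}(\rho_n)$ separately. The key is to relate the two quantities through their operational descriptions: $\uS(\hrho)$ is defined via the spectral projections $\{\rho_n \ge 2^{-n\gamma}I_n\}$, while $H_{\min}(\rho_n) = -\log\|\rho_n\|_\infty$ and the smooth version optimizes over nearby subnormalized states $\bar\rho \in B^\eps(\rho_n)$. So I would translate both sides into statements about how much weight $\rho_n$ places on eigenvalues above or below the threshold $2^{-n\gamma}$.

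First, let me establish the \emph{lower bound} on the smooth min-entropy rate, i.e., that for any $\gamma < \uS(\hrho)$ we can exhibit a good smoothing $\bar\rho$. Fix $\gamma < \uS(\hrho)$; by definition \reff{us}, $\limsup_n \tr[\{\rho_n \ge 2^{-n\gamma}I_n\}\rho_n] = 0$, so for large $n$ almost all the weight of $\rho_n$ sits on eigenvalues \emph{below} $2^{-n\gamma}$. The natural candidate is to take $\bar\rho_n := \{\rho_n < 2^{-n\gamma}I_n\}\,\rho_n\,\{\rho_n < 2^{-n\gamma}I_n\}$, the truncation of $\rho_n$ to its small eigenvalues. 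This $\bar\rho_n$ is subnormalized with $\tr\bar\rho_n \le \tr\rho_n$, it satisfies $\|\bar\rho_n\|_\infty \le 2^{-n\gamma}$ so that $H_{\min}(\bar\rho_n) \ge n\gamma$, and its trace-distance to $\rho_n$ is controlled by the discarded weight $\tr[\{\rho_n \ge 2^{-n\gamma}I_n\}\rho_n]$, which tends to $0$. Hence for any $\eps > 0$ we get $\bar\rho_n \in B^\eps(\rho_n)$ eventually, giving $\frac{1}{n}H^\eps_{\min}(\rho_n) \ge \gamma$ for large $n$, and thus $\liminf_n \frac{1}{n}H^\eps_{\min}(\rho_n) \ge \gamma$. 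Letting $\gamma \uparrow \uS(\hrho)$ yields one direction.

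For the \emph{upper bound}, I would argue by contradiction or directly: suppose $\bar\rho_n \in B^\eps(\rho_n)$ achieves (nearly) the supremum defining $H^\eps_{\min}(\rho_n)$, with $\|\bar\rho_n\|_\infty \le 2^{-n\gamma}$ for some $\gamma$ exceeding $\uS(\hrho)$. The goal is to show that $\rho_n$ cannot place vanishing weight above a slightly smaller threshold $2^{-n\gamma'}$ with $\gamma' > \uS(\hrho)$, contradicting \reff{us}. Using the operator inequality $\bar\rho_n \le 2^{-n\gamma}I_n$ together with Corollary \ref{cor1} (since $\|\rho_n - \bar\rho_n\|_1 \le \eps$), I would apply the positive operator $P_n = \{\rho_n \ge 2^{-n\gamma'}I_n\}$ to the difference $\rho_n - \bar\rho_n$ and bound $\tr[P_n\rho_n]$ from above by $\eps + \tr[P_n\bar\rho_n] \le \eps + 2^{-n\gamma}\tr P_n \le \eps + 2^{-n(\gamma-\gamma')}\dim{\cal H}_n$. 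Choosing $\gamma' < \gamma$ makes the exponential term vanish (since $\dim{\cal H}_n$ grows only as $(\dim{\cal H})^n$, the gap $\gamma - \gamma'$ can be taken large enough), leaving $\limsup_n \tr[P_n\rho_n] \le \eps$, and then sending $\eps \to 0$ forces $\gamma' \le \uS(\hrho)$ by definition of the sup in \reff{us}.

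\emph{The main obstacle} will be handling the order of limits in \reff{main} cleanly on the upper-bound side: one must pass from a per-$n$, per-$\eps$ bound on $\tr[P_n\rho_n]$ to the $\eps\to 0$ statement while keeping the dimensional factor $(\dim{\cal H})^n$ controlled by the exponential $2^{-n(\gamma-\gamma')}$. Concretely, I expect the delicate point to be justifying that a vanishing smoothed weight at threshold $\gamma$ translates into vanishing (not merely $\le\eps$) spectral weight after the $\eps\to 0$ limit, which forces the threshold $\gamma'$ used in the spectral definition to be chosen strictly below $\gamma$ and then optimized. The dimension factor is benign here precisely because the Hilbert spaces are of product form ${\cal H}_n = {\cal H}^{\otimes n}$, so $\frac{1}{n}\log\dim{\cal H}_n$ is bounded; I would flag that this is where finite-dimensionality and the tensor-product structure are essential.
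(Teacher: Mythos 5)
Your first inequality is sound and is exactly the paper's own argument: truncate $\rho_n$ by the projection $\{\rho_n < 2^{-n\gamma}I_n\}$ for $\gamma$ slightly below $\uS(\hrho)$, and use the gentle-measurement lemma (Lemma~\ref{gm}) to place the truncated operator in $B^\eps(\rho_n)$. The genuine gap is in your converse direction, at the point where you bound $2^{-n\gamma}\tr P_n$, with $P_n=\{\rho_n\ge 2^{-n\gamma'}I_n\}$, by invoking $\tr P_n\le\dim\mathcal{H}_n$. Your error term is then $2^{-n(\gamma-\gamma')}\dim\mathcal{H}_n=2^{-n(\gamma-\gamma'-\log\dim\mathcal{H})}$, which vanishes only if $\gamma-\gamma'>\log\dim\mathcal{H}$. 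That gap is never available: $\gamma$ is tied to $\frac{1}{n}H^\eps_{\min}(\rho_n)\le \log\dim\mathcal{H}+o(1)$ (any $\bar\rho\in B^\eps(\rho_n)$ has $\tr\bar\rho\ge 1-\eps$, hence $\|\bar\rho\|_\infty\ge(1-\eps)/\dim\mathcal{H}_n$), while $\uS(\hrho)\ge 0$ always holds. So requiring $\gamma-\gamma'>\log\dim\mathcal{H}$ forces $\gamma'\le 0\le\uS(\hrho)$, and the conclusion ``$\gamma'\le\uS(\hrho)$'' becomes vacuous; what your argument actually proves is only $\lim_{\eps\to 0}\liminf_n\frac{1}{n}H^\eps_{\min}(\rho_n)\le\uS(\hrho)+\log\dim\mathcal{H}$, which is trivially true and is not the theorem. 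Contrary to the closing paragraph of your proposal, the dimension factor is not benign; it is precisely what breaks the proof.

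The repair is the paper's Lemma~\ref{lem2}, which you never use: since $P_n$ projects onto eigenvectors of $\rho_n$ with eigenvalue at least $2^{-n\gamma'}$ and $\tr\rho_n=1$, one has $2^{-n\gamma'}\tr P_n\le\tr[P_n\rho_n]\le 1$, i.e.\ $\tr P_n\le 2^{n\gamma'}$, with no dimension factor at all. Your chain then reads $\tr[P_n\rho_n]\le \eps+2^{-n(\gamma-\gamma')}$, which is eventually below $\eps+\delta'$ for any fixed positive gap $\gamma-\gamma'$, and the rest of your outline---Corollary~\ref{cor1} for the smoothing term, then $n\to\infty$, then $\eps\to 0$, then letting $\gamma'$ approach $\gamma$---goes through and is essentially the paper's proof of its Lemma~\ref{above} (phrased there via the auxiliary quantity $\underline{\Upsilon}$). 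With this bound in place, the tensor-product structure of $\mathcal{H}_n$ plays no role anywhere in the argument.
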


\begin{proof}
For any constant $\gamma>0$, let us define projection operators
\be Q_n^\gamma := \{\rho_n< 2^{-n\gamma} I_n\}\,
\label{proj1}\ee
and \be
 P_n^\gamma := I_n -  Q_n^\gamma = \{\rho_n\ge 2^{-n\gamma} I_n\}.\\
\label{proj2}
\ee
In terms of these projections, we can write
\be
\underline{S}(\hrho) = \sup \Big\{ \gamma : \limsup_{n\rightarrow \infty} \mathrm{Tr}\big[ P_n^\gamma
\rho_n\big] = 0 \Big\},
\label{us1}
\ee
or alternatively as
\be
\underline{S}(\hrho) = \sup \Big\{ \gamma : \liminf_{n\rightarrow \infty} \mathrm{Tr}\big[ Q_n^\gamma
\rho_n\big] = 1 \Big\},
\label{uss}
\ee
since each $\rho_n$ in the sequence $\hrho$ is a state (i.e., $\tr \rho_n = 1$).
From Proposition \ref{equiv_inf} and \reff{spec} of $\underline{S}(\hrho)$
it follows that the latter is equivalently
given by the expression
\be
\underline{S}(\hrho) = \sup \Big\{ \gamma : \limsup_{n\rightarrow \infty} \mathrm{Tr}\big[ P_n^\gamma
(\rho_n- 2^{-n\gamma} I_n)\big] = 0 \Big\},
\label{us2}
\ee

From \reff{uss} it follows that, for any $\gamma < \uS(\hrho)$ and any $\delta >0$, for $n$ large enough,
\be\tr \big[Q_n^\gamma
\rho_n\big] > 1 - \delta.\label{one} \ee

For {{any}} given $\alpha >0$, let $\gamma:= \uS(\hrho) - \alpha$, and let
\be \trho := Q_n^\gamma  \rho_n Q_n^\gamma
\label{trho}
\ee Then using (\ref{one})
and Lemma \ref{gm} we infer that, for $n$ large enough,
\be
||  \rho_n - \trho||_1 \le 2\sqrt{\delta}.
\label{onen}
\ee
In other words, for $n$ large enough, $\trho \in B^\eps(\rho_n)$ with $\eps = 2{\sqrt{\delta}}$.

We first prove the upper bound
\be
\uS(\hrho) \le \lim_{\eps \rightarrow 0} \liminf_{n \rightarrow \infty}
\frac{1}{n} H^\eps_{\min}(\rho_n)
\label{first}
\ee
For $n$ large enough,
\bea
H_{\min}^\eps( \rho_n)
&\equiv& \sup_{\orho_n\in B^\eps(\rho_n)}
H_{\min}(\orho_n)\nonumber\\
&\ge & H_{\min}(\trho)= - \log \| \trho \|_{\infty} \nonumber\\
&>& n \gamma = n(\uS(\hrho) - \alpha) \label{direct}
\eea
The last line follows from the inequality $\trho < 2^{-n\gamma} I_n$, and
since $\alpha$ is arbitrary,
we obtain the desired bound \reff{first}.

We next prove the converse, i.e.,
\be
\uS(\hrho) \ge \lim_{\eps \rightarrow 0} \liminf_{n \rightarrow \infty}
\frac{1}{n} H^\eps_{\min}(\rho_n)
\label{second}
\ee
Consider an operator $\orho_n^\eps \in B^\eps (\rho_n)$ for which
\be
-\log \| \orho_n^\eps \|_{\infty} = \sup_{\orho_n \in B^\eps (\rho_n)}
\bigl[ -\log \| \orho_n \|_{\infty} \bigr].
\label{orho}
\ee

We shall also make use of a quantity $\underline{\Upsilon}(\homega)$,
defined for any sequence of positive operators $\homega = \{ \omega_n
\}_{n=1}^{\infty}$ as follows:
\be
\underline{\Upsilon}(\homega) = \sup \Big\{ \alpha : \limsup_{n\rightarrow \infty} \mathrm{Tr}\big[
\{\omega_n \ge 2^{-n\alpha} I_n\}
\Pi_n^\alpha\big] = 0 \Big\},
\label{us22}
\ee where $\Pi_n^\alpha:= (\omega_n- 2^{-n\alpha} I_n)$.  Note that
$\underline{\Upsilon}(\homega)$ reduces to the inf-spectral entropy
rate $\underline{{S}}(\homega)$ given by \reff{us2}, if $\homega$ is a
sequence of states.

By the definition of the smooth min-entropy, (\ref{second}) then follows
from Lemma~\ref{above} below.

\end{proof}

\begin{lemma}
\label{above}
For any sequence of states $\hrho = \{\rho_n\}_{n=1}^\infty$, and any
$\eps >0$, there exists an $n_0 \in \mathbb{N}$, such that for all $n
\ge n_0$ \be \uS(\hrho) \ge \frac{1}{n} \bigl[ - \log \| \orho_n^\eps
\|_{\infty} \bigr],
\label{last}
\ee
with $\orho_n^\eps$ defined by \reff{orho}.
\end{lemma}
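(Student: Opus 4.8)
The plan is to convert the operator inequality satisfied by the maximiser of \reff{orho} into a bound on the spectral weight that $\rho_n$ places above a threshold $2^{-n\gamma}$, and then to read off $\uS(\hrho) \ge \tfrac1n[-\log\|\orho_n^\eps\|_\infty]$ from the characterisation \reff{us1}. Write $\beta_n := \tfrac1n[-\log\|\orho_n^\eps\|_\infty]$. Since $\orho_n^\eps$ is the maximiser in \reff{orho}, it is the subnormalised operator of smallest $\infty$-norm inside $B^\eps(\rho_n)$, so $\|\orho_n^\eps\|_\infty = 2^{-n\beta_n}$, i.e.\ $\orho_n^\eps \le 2^{-n\beta_n} I_n$, while $\|\orho_n^\eps - \rho_n\|_1 \le \eps$. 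The goal is thus to show that any $\gamma < \beta_n$ lies (for $n$ large) in the defining set of $\uS(\hrho)$.

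First I would fix $\gamma < \beta_n$ and estimate $\tr[P_n^\gamma \rho_n]$ with $P_n^\gamma = \{\rho_n \ge 2^{-n\gamma}I_n\}$ as in \reff{proj2}. A rank count gives $\rank(P_n^\gamma) \le 2^{n\gamma}$, since the eigenvalues of $\rho_n$ exceeding $2^{-n\gamma}$ sum to at most one. Corollary~\ref{cor1} transfers the trace from $\orho_n^\eps$ to $\rho_n$: from $\|\orho_n^\eps - \rho_n\|_1 \le \eps$ and $0 \le P_n^\gamma \le I_n$ one gets $\tr[P_n^\gamma(\rho_n - \orho_n^\eps)] \le \eps$, and combining this with $\orho_n^\eps \le 2^{-n\beta_n}I_n$,
\begin{align*}
\tr[P_n^\gamma \rho_n]
&\le \tr[P_n^\gamma \orho_n^\eps] + \eps \\
&\le 2^{-n\beta_n}\,\rank(P_n^\gamma) + \eps
\le 2^{-n(\beta_n-\gamma)} + \eps .
\end{align*}
The geometric term $2^{-n(\beta_n-\gamma)}$ is harmless as long as $\gamma$ stays below $\beta_n$. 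To package the passage to $\uS(\hrho)$ I would route this through the auxiliary functional $\underline{\Upsilon}$ of \reff{us22} applied to the sequence $\widehat{\orho^\eps}=\{\orho_n^\eps\}$: the inequality $\orho_n^\eps \le 2^{-n\beta_n}I_n$ makes the projection $\{\orho_n^\eps \ge 2^{-n\alpha}I_n\}$ vanish for $\alpha < \beta_n$, so $\underline{\Upsilon}(\widehat{\orho^\eps})$ dominates $\beta_n$, while Lemma~\ref{lem1} together with the same $\eps$-transfer relates $\underline{\Upsilon}(\widehat{\orho^\eps})$ back to the spectral weight \reff{us2} that defines $\uS(\hrho)$.

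The hard part is the additive $\eps$ that the smoothing injects. The estimate above yields only $\tr[P_n^\gamma\rho_n] \le 2^{-n(\beta_n-\gamma)} + \eps$, whose limit in $n$ is $\eps$, whereas membership of $\gamma$ in the defining set of $\uS(\hrho)$ in \reff{us1} demands that the spectral weight \emph{vanish}, not merely be bounded by $\eps$. Removing this term for a fixed $\eps>0$ is precisely the delicate boundary case $\gamma = \uS(\hrho)$: arguing only for $\gamma > \uS(\hrho)$ would control $\limsup_n \beta_n$ but not each individual $\beta_n$, so the per-$n$ inequality must exploit the \emph{optimality} of $\orho_n^\eps$ rather than closeness alone. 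Concretely, I expect the resolution to use that the minimiser in \reff{orho} saturates the trace-distance constraint, so that its ceiling $2^{-n\beta_n}$ is pinned to the $\eps$-quantile of the spectrum of $\rho_n$; this ties $\beta_n$ to the very spectral data that \reff{us1} uses to define $\uS(\hrho)$, and showing that the excised weight is accounted for exactly (rather than through the loose $+\eps$ bound) is the crux on which the lemma turns.
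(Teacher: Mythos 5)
Your computation is exactly the engine of the paper's own proof: your observation that $\orho_n^\eps \le 2^{-n\beta_n} I_n$ forces the projection $\{\orho_n^\eps \ge 2^{-n\alpha}I_n\}$ to vanish for $\alpha < \beta_n$ is the paper's first step \reff{step1}, and your chain $\tr[P_n^\gamma\rho_n] \le 2^{-n(\beta_n-\gamma)} + \eps$ (via Corollary \ref{cor1}, Lemma \ref{lem1}, and $\tr P_n^\gamma \le 2^{n\gamma}$) is precisely the paper's estimate \reff{eq1}. But you stall at the right place and then point the wrong way. The paper never eliminates the additive $\eps$ at fixed $\eps$: it sets $\alpha = \underline{\Upsilon}(\hrho^\eps) - \delta/2$ and $\gamma = \alpha - \delta/2$, obtains $\limsup_{n\to\infty}\tr[P_n^\gamma\rho_n] \le \eps$, and concludes only the limiting inequality \reff{step2}, namely $\lim_{\eps\to 0}\underline{\Upsilon}(\hrho^\eps) \le \uS(\hrho)$. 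The missing idea is a quantifier inversion, not a sharper estimate: hold $\gamma$ fixed while $\eps \to 0$, so that the bound $\limsup_n \tr[P_n^\gamma\rho_n] \le \eps$, valid for \emph{every} $\eps > 0$, forces the $\limsup$ to vanish, whence $\gamma \le \uS(\hrho)$ by \reff{us1}. This outer limit is available because Theorem \ref{thm:relSHmin} itself carries $\lim_{\eps\to 0}$ outside the $\liminf_n$; no per-$n$, fixed-$\eps$ inequality is ever established, nor is one needed.

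Your proposed resolution --- exploiting saturation of the trace-distance constraint to pin $2^{-n\beta_n}$ to the $\eps$-quantile of the spectrum and account for the excised weight exactly --- cannot be carried out, because the fixed-$\eps$ statement you would thereby prove (which is, admittedly, what Lemma \ref{above} literally asserts) is false in general. Take
\begin{equation*}
\rho_n = \mu\,\proj{v_n} + (1-\mu)\,2^{-nb}\,\pi_n \ ,
\end{equation*}
with $0 < \mu < 1$ and $b > 0$ fixed, $\pi_n$ a rank-$2^{nb}$ projector orthogonal to $\ket{v_n}$. Then $\uS(\hrho) = 0$, since for any $\gamma > 0$ the projection $P_n^\gamma$ eventually captures the atom and $\limsup_n \tr[P_n^\gamma\rho_n] \ge \mu > 0$. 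Yet for any fixed $\eps > \mu$, the operator $\rho_n - \mu\proj{v_n} = (1-\mu)2^{-nb}\pi_n$ lies in $B^\eps(\rho_n)$, so $-\frac{1}{n}\log\|\orho_n^\eps\|_\infty \ge b > 0 = \uS(\hrho)$ for all $n$, violating \reff{last}. So the lemma is correct only in the $\eps \to 0$ form --- which is all the paper's two-step argument proves and all the theorem uses; the loose phrasing of the lemma's statement likely misled you into hunting for a per-$\eps$ fix where none can exist.
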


\begin{proof}
We prove this in two steps. We first prove that for any $\eps >0$ and $n$ large enough,
\be
\underline{\Upsilon}(\hrho^\eps) \ge-\frac{1}{n} \log \| \orho_n^\eps \|_{\infty},
\label{step1}
\ee
where $\hrho^\eps := \{\orho_n^\eps\}_{n=1}^\infty$.
We then prove that
\be
\lim_{\eps \rightarrow 0} \underline{\Upsilon}(\hrho^\eps) \le
\uS(\hrho)
\label{step2}
\ee

For any arbitrary $\eta >0$, let $\alpha$ be defined through
the relation
\be \|\orho_n^\eps \|_{\infty} = 2^{-n(\alpha + \eta)}.
\label{alpha}
\ee
This implies the operator inequality, $\orho_n^\eps- 2^{-n(\alpha + \eta)}I_n \le 0$,
and hence
$\orho_n^\eps < 2^{-n\alpha}I_n$.

Hence,
\be\tr\bigl[\{\orho_n^\eps \ge 2^{-n\alpha} I_n\}
(\orho_n^\eps - 2^{-n\alpha} I_n) ] = 0,
\label{two}
\ee
Using this, and the
definition of $\underline{\Upsilon}(\hrho^\eps)$, we infer
that $\alpha \le \underline{\Upsilon}(\hrho^\eps)$. Then, using
\reff{alpha} we obtain the bound
$$
- \frac{1}{n} \log \| \orho_n^\eps \|_{\infty} - \eta \le \underline{\Upsilon}(\hrho^\eps),$$
which in turn yields \reff{step1}, since $\eta$ is arbitrary.

To prove \reff{step2} note that
\bea
0 &\le& \tr(P_n^\gamma \rho_n)\nonumber\\
&=& \tr(P_n^\gamma \orho_n^\eps) + \tr\bigl[P_n^\gamma (\rho_n - \orho_n^\eps)\bigr]\nonumber\\
&\le & \tr\bigl[P_n^\gamma(\orho_n^\eps - 2^{-n\alpha} I_n)\bigr] + 2^{-n\alpha} \tr P_n^\gamma + \eps\nonumber\\
&\le& \tr\bigl[ \{\orho_n^\eps \ge 2^{-n\alpha} I_n\} (\orho_n^\eps - 2^{-n\alpha} I_n)\bigr]
+ 2^{-n(\alpha - \gamma)} + \eps.\nonumber\\
\label{eq1}
\eea

The third line in \reff{eq1} is obtained by using the bound
$$\tr\bigl[ P_n^\gamma(\rho_n - \orho_n^\eps)\bigr]\le \eps, $$
which follows from Corollary \ref{cor1}, since $\orho_n^\eps
\in B^\eps(\rho_n)$.

To arrive at the last line of \reff{eq1} we use Lemma \ref{lem1} and
the fact that $\tr{P_n^\gamma} \le 2^{n\gamma}$, which follows from
Lemma \ref{lem2}.

Let us choose $\gamma = \alpha - \delta/2$, for an arbitrary
$\delta>0$, with $\alpha= \underline{\Upsilon}(\hrho^\eps) -
\delta/2$. Then both the first and second terms on the r.h.s.\ of
\reff{eq1} goes to zero as $n \tends \infty$. Therefore, for $n$ large
enough and any $\delta^{'}>0$, in the limit $\eps \tends 0$, we must
have that \be\tr(P_n^\gamma \rho_n) \le \delta^{'},\ee which in turn
implies that $\gamma \le \uS(\hrho)$.

From the choice of the parameters $\alpha$ and $\gamma$ it follows
that \be \lim_{\eps \tends 0} \underline{{\Upsilon}}(\hrho^\eps ) -
\delta < \underline{{{S}}}(\hrho).\ee But since $\delta$ is arbitrary,
we obtain the inequality \reff{step2}.
\end{proof}

\subsection{Relation between $\overline{S}(\hrho)$ and $H_{\max}^\eps(\rho)$}
\begin{theorem}
\label{max}
Given a sequence of states $\hrho=\{\rho_n\}_{n=1}^\infty$, where $\rho_n\in {\cal{B}}({\cal{H}}_{n})$,
with ${\cal{H}}_{n} = {\cal{H}}^{\otimes {n}}$, the sup-spectral entropy rate $\oS(\hrho)$
is related to the smooth max-entropy as follows:
\be
\oS(\hrho)= \lim_{\eps \rightarrow 0} \limsup_{n \rightarrow \infty}
\frac{1}{n} H^\eps_{\max}(\rho_n)
\label{main3}
\ee
%The Hilbert space ${\cal{H}}$ is assumed to be finite-dimensional.
\end{theorem}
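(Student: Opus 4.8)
The plan is to mirror the two-inequality strategy used for Theorem~\ref{thm:relSHmin}, but now exploiting that the max-entropy is the logarithm of a \emph{rank} rather than of an operator norm. Throughout I would work with the spectral projections $P_n^\gamma := \{\rho_n \ge 2^{-n\gamma} I_n\}$ and $Q_n^\gamma := I_n - P_n^\gamma$ from \reff{proj1}--\reff{proj2}, and I recall from Lemma~\ref{lem2} (applied with $\omega_n = I_n$) that $\rank(P_n^\gamma) = \tr P_n^\gamma \le 2^{n\gamma}$.

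For the bound $\oS(\hrho) \ge \lim_{\eps\to0}\limsup_n \frac1n H_{\max}^\eps(\rho_n)$, I would fix any $\gamma > \oS(\hrho)$. By definition~\reff{os} and monotonicity of $\tr[P_n^\gamma\rho_n]$ in $\gamma$, one has $\lim_n \tr[P_n^\gamma\rho_n]=1$, so for any $\delta>0$ and $n$ large, $\tr[\rho_n P_n^\gamma]\ge 1-\delta$. Setting $\bar\rho_n := P_n^\gamma\rho_n P_n^\gamma$, the gentle measurement Lemma~\ref{gm} gives $\|\rho_n-\bar\rho_n\|_1\le 2\sqrt\delta$, so $\bar\rho_n \in B^\eps(\rho_n)$ with $\eps=2\sqrt\delta$. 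Since $\rank(\bar\rho_n)\le\rank(P_n^\gamma)\le 2^{n\gamma}$, we obtain $H_{\max}^\eps(\rho_n)\le H_{\max}(\bar\rho_n)=\log\rank(\bar\rho_n)\le n\gamma$. Dividing by $n$, taking $\limsup_n$, then $\delta\to0$ (hence $\eps\to0$), and finally $\gamma\downarrow\oS(\hrho)$ yields this inequality.

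The converse $\oS(\hrho)\le\lim_{\eps\to0}\limsup_n \frac1n H_{\max}^\eps(\rho_n)$ is the step I expect to be the crux. For each $\eps>0$ and $n$ I would let $\orho_n^\eps\in B^\eps(\rho_n)$ attain the (integer-valued) infimum defining $H_{\max}^\eps(\rho_n)$, let $\pi_n$ be the projector onto its support, so $\tr\pi_n=\rank(\orho_n^\eps)=2^{H_{\max}^\eps(\rho_n)}$, and write $s_n := \frac1n H_{\max}^\eps(\rho_n)$. Using $\tr[\pi_n\orho_n^\eps]=\tr\orho_n^\eps\ge 1-\eps$ together with Corollary~\ref{cor1} one first checks $\tr[\pi_n\rho_n]\ge 1-2\eps$. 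Applying Lemma~\ref{lem1} with $A=\rho_n$, $B=2^{-n\gamma}I_n$ and $P=\pi_n$ gives
\be
\tr[\pi_n\rho_n] - 2^{-n\gamma}\tr\pi_n \le \tr[P_n^\gamma(\rho_n-2^{-n\gamma}I_n)] \le \tr[P_n^\gamma\rho_n],
\ee
which rearranges into the key estimate
\be
\tr[P_n^\gamma\rho_n] \ge 1 - 2\eps - 2^{-n(\gamma-s_n)}.
\ee
If $\gamma$ exceeds $R := \lim_{\eps\to0}\limsup_n s_n$, then $\gamma-s_n$ is eventually bounded below by a positive constant, so the last term vanishes and $\liminf_n\tr[P_n^\gamma\rho_n]\ge 1-2\eps$.

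The main obstacle is the correct handling of the two nested limits in this converse: both the rank-optimal $\orho_n^\eps$ and the threshold $s_n$ depend on $\eps$, yet the spectral trace $\tr[P_n^\gamma\rho_n]$ does not. This $\eps$-independence is precisely what lets the argument close: since the estimate $\liminf_n\tr[P_n^\gamma\rho_n]\ge 1-2\eps$ holds for every $\eps>0$ whenever $\gamma>R\ge\limsup_n s_n$, sending $\eps\to0$ removes the residual trace-distance error and forces $\liminf_n\tr[P_n^\gamma\rho_n]=1$, whence $\oS(\hrho)\le\gamma$ by~\reff{os}. Letting $\gamma\downarrow R$ then completes the proof.
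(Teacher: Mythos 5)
Your proposal is correct and follows essentially the paper's own route: the first inequality is proved identically (gentle measurement applied to $P_n^\gamma \rho_n P_n^\gamma$ together with the bound $\tr P_n^\gamma \le 2^{n\gamma}$ from Lemma~\ref{lem2}), and your converse uses exactly the paper's ingredients --- the support projector $\pi_n$ of the rank-optimal element of $B^\eps(\rho_n)$, the estimate $\tr[\pi_n\rho_n]\ge 1-2\eps$ via Corollary~\ref{cor1}, and the Lemma~\ref{lem1} comparison against $P_n^\gamma$ with error term $2^{-n(\gamma-s_n)}$, which is just a rearrangement of \reff{upb5}. The only difference is organizational: you run the chain directly, concluding $\liminf_n\tr[P_n^\gamma\rho_n]=1$ for every $\gamma$ above the limit $R$ and hence $\oS(\hrho)\le\gamma$, whereas the paper packages the same inequalities as a proof by contradiction via \reff{compress}; your version is sound and arguably slightly cleaner, since it makes the $\eps$-independence of $\tr[P_n^\gamma\rho_n]$ and the monotonicity of $H^\eps_{\max}$ in $\eps$ do the work explicitly, avoiding the paper's implicit passage to a subsequence on which $\tr[P_n^\beta\rho_n]$ stays bounded away from $1$.
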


\begin{proof}
By definition, the sup-spectral entropy rate for the given sequence of states is
\be
\oS(\hrho) = \inf \Big\{ \gamma : \liminf_{n\rightarrow \infty} \mathrm{Tr}\big[ P_n^\gamma
\rho_n\big] = 1 \Big\},
\label{os1}
\ee
where $P_n^\gamma$ is the projection operator defined by \reff{proj2}.

From \reff{os1} it follows that, for any $\gamma \ge \oS(\hrho)$ and any $\delta >0$, for $n$ large enough
\be\tr \big[P_n^\gamma
\rho_n\big] > 1 - \delta.\label{o1} \ee

For {{any}} given $\alpha >0$, choose $\gamma= \oS(\hrho) + \alpha$, and let
\be \trho := P_n^\gamma  \rho_n P_n^\gamma
\label{otrho}
\ee Then using (\ref{o1})
and Lemma \ref{gm} we infer that, for $n$ large enough,
\be
||  \rho_n - \trho||_1 \le 2\sqrt{\delta}.
\label{on1n}
\ee
and hence $\trho \in B^\eps(\rho_n)$ with $\eps = 2{\sqrt{\delta}}$.

We first prove the bound
\be
\lim_{\eps \rightarrow 0} \limsup_{n \rightarrow \infty}
\frac{1}{n} H^\eps_{\max}(\rho_n)\le \oS(\hrho)
\label{ofirst}
\ee

For $n$ large enough,
\bea
H_{\max}^\eps( \rho_n)
&=& \inf_{\orho_n\in B^\eps(\rho_n)}
H_{\max}(\orho_n)\nonumber\\
&\le & H_{\max}(\trho)\nonumber\\
&=& \log \rank\, (\trho)
\label{oconverse}
\eea
%where ${\otrho}$ denotes the projector onto the support of $\trho$.

From the definition \reff{otrho} of $\trho$ it follows that
$\rank\,{\trho} \le \tr P_n^\gamma$. Hence,
\bea
H_{\max}^\eps( \rho_n)
&\le& \log \tr P_n^\gamma \nonumber\\
&\le& n\gamma = \oS(\hrho) + \alpha,
\label{ocon1}
\eea
where once again we use the bound $\tr{P_n^\gamma} \le 2^{n\gamma}$.
The last line of \reff{ocon1} yields the desired bound \reff{ofirst}
since $\alpha$ is arbitrary.

To complete the proof of Theorem \ref{max} we assume that
\be
\lim_{\eps \rightarrow 0} \limsup_{n \rightarrow \infty}
\frac{1}{n} H^\eps_{\max}(\rho_n) < \oS(\hrho)
\label{osec}
\ee
and show that this leads to a contradiction.

Let $\sigma_{n,\eps}$ be the operator for which
\be
H_{\max} (\sigma_{n,\eps}) := \inf_{\orho_n \in B^\eps(\rho_n)}H_{\max} (\orho_n).\ee
Hence, $H^\eps_{\max}(\rho_n) = \log \rank\, \sigma_{n,\eps}$,
and the assumption \reff{osec} is equivalent to the following assumption:
\be
\lim_{\eps \rightarrow 0} \lim_{n \rightarrow \infty}
\frac{1}{n} \log \rank\, \sigma_{n,\eps} < \oS(\hrho).
\label{assume}
\ee
Since $\sigma_{n,\eps} \in B^\eps(\rho_n)$, $\tr \sigma_{n,\eps} \ge 1- \eps$. Let $\sigma_{n,\eps}^0$
denote the projection onto the support of $\sigma_{n,\eps}$. Then
\bea
\tr\bigl(\sigma_{n,\eps}^0\rho_n\bigr) &=& \tr \bigl[ \bigl((\rho_n - \sigma_{n,\eps})
+ \sigma_{n,\eps}\bigr)\sigma_{n,\eps}^0\bigr]\nonumber\\
&=& \tr \bigl[(\rho_n - \sigma_{n,\eps}) \sigma_{n,\eps}^0\bigr] + \tr \sigma_{n,\eps}\nonumber\\
&\ge& \tr \bigl[ \{\rho_n \le \sigma_{n,\eps}\} (\rho_n - \sigma_{n,\eps})\bigr] + 1- \eps\nonumber\\
&\ge & - \eps + 1 - \eps = 1- 2\eps.
\label{upb4}
\eea
The inequality in the third line follows from Lemma \ref{lem1}.
We arrive at the last inequality in \reff{upb4} by using the bound
$$\tr \bigl[ \{\rho_n \le \sigma_{n,\eps}\} (\rho_n - \sigma_{n,\eps})\bigr] \ge - \eps,$$
which arises from the fact that $\sigma_{n,\eps} \in B^\eps(\rho_n)$.

Note, however, that for $n$ large enough, \reff{upb4} leads to a contradiction, in the limit $\eps \rightarrow 0$. This is because, for any real number $R < \oS(\hrho)$
and any projection $\pi_n$, with $\tr \pi_n = 2^{nR}$, for $n$ large
enough, we have
\be
\tr(\pi_n \rho_n) \le 1 - c_0,
\label{compress}
\ee
for some constant $c_0 > 0$.
The inequality \reff{compress} can be proved as follows:

\bea
\tr(\pi_n \rho_n) &=& \tr\bigl[\pi_n (\rho_n - 2^{-n\beta} I_n)\bigr] + 2^{-n\beta} \tr \pi_n\nonumber\\
&\le & \tr\bigl[\{\rho_n \ge 2^{-n\beta} I_n\}(\rho_n - 2^{-n\beta} I_n) \bigr] \nonumber\\
& & + 2^{-n(\beta - R)}
\nonumber\\
\label{upb5}
\eea
Choose $\oS(\hrho) > \beta > R$. For such a choice, the second term on the right  hand side of
\reff{upb5} tends to zero asymptotically in $n$. However, the first term does not tend to $1$
and we hence obtain the bound \reff{compress}.

\end{proof}

\section{Relation between conditional entropies}

Consider a sequence of bipartite states
$\hrho^{AB}=\{\rho_n^{AB}\}_{n=1}^\infty$, with $\rho_n^{AB} \in
{\cal{B}}\bigl(({\cal{H}}_A \otimes {\cal{H}}_B)^{\otimes n}
\bigr)$. Let $\hrho^{AB}=\{\rho_n^{AB}\}_{n=1}^\infty$ denote the
corresponding sequence of reduced states.

For the sequence $\hrho^{AB}$, the sup-spectral conditional entropy
rate $\oS(A|B)$ and the inf-spectral conditional entropy rate
$\uS(A|B)$, defined respectively by \reff{ocond} and \reff{ucond}, can
be expressed as follows: \bea
\overline{S}(A|B) &=& \inf \Big\{ \gamma : \liminf_{n\rightarrow \infty} \mathrm{Tr}\big[P_n^\gamma \rho_n^{AB}\big] = 1 \Big\}, \label{oscond}\\
\underline{S}(A|B) &=& \sup \Big\{ \gamma : \limsup_{n\rightarrow
  \infty} \mathrm{Tr}\big[P_n^\gamma \rho_n^{AB}\big] = 0
\Big\}, \label{uscond} \eea where \be P_n^\gamma := \{ \rho_n^{AB}
\geq 2^{-n\gamma} I_n^A \otimes \rho_n^{B}\}.\label{proj21}\ee Here
$I_n^A$ denotes the identity operator in
${\cal{B}}({\cal{H}}_A^{\otimes n})$.

We use the following key properties of $H_{\min}^\eps
(\rho_{AB}|\rho_{B})$ given by Lemma \ref{lem:Hminepsaddbound} and
Lemma \ref{lem:Hminepsprojbound} below.

\begin{lemma} \label{lem:Hminepsaddbound}
  Let $\rho_{A B}$ and  $\sigma_B$ be density operators, let $\Delta_{A B}$ be a positive operator, and let $\lambda \in \bbR$  such that
   \[
    \rho_{A B} \leq 2^{-\lambda} \cdot \id_A \otimes \sigma_B + \Delta_{A B} \ .
  \]
  Then $\Hmin^{\eps}(\rho_{A B}|\sigma_B) \geq \lambda$ for any $\eps \geq \sqrt{8 \tr(\Delta_{A B})}$.
\end{lemma}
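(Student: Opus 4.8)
The plan is to produce a single operator $\bar\rho$ in the smoothing ball $B^\eps(\rho_{AB})$ satisfying the sharp bound $\bar\rho \le 2^{-\lambda}\,\id_A\otimes\sigma_B$; by the definition of $\Hmin(\cdot|\sigma_B)$ this forces $\Hmin(\bar\rho|\sigma_B)\ge\lambda$, and since $\Hmin^\eps(\rho_{AB}|\sigma_B)$ is the supremum of $\Hmin(\bar\rho|\sigma_B)$ over $B^\eps(\rho_{AB})$ (Definition~\ref{def:smoothentropies}), the claim follows. Write $M:=2^{-\lambda}\,\id_A\otimes\sigma_B$ and $N:=M+\Delta_{AB}$, so that the hypothesis reads simply $\rho_{AB}\le N$.

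\emph{Construction.} I would define $\bar\rho$ by filtering $\rho_{AB}$ down below the level $M$, namely $\bar\rho:=M^\half N^{-\half}\rho_{AB}N^{-\half}M^\half$, the inverse being taken on $\mathrm{supp}(N)$ (note $\mathrm{supp}(\rho_{AB}),\mathrm{supp}(M)\subseteq\mathrm{supp}(N)$ since $N\ge\rho_{AB}\ge0$ and $N\ge M\ge0$). Three properties are immediate. First, $\bar\rho\ge0$, being a conjugation of a positive operator. Second, from $\rho_{AB}\le N$ we get $N^{-\half}\rho_{AB}N^{-\half}\le\Pi$, where $\Pi$ is the projector onto $\mathrm{supp}(N)$, so conjugating by $M^\half$ gives $\bar\rho\le M^\half\Pi M^\half\le M$, which is exactly $\bar\rho\le 2^{-\lambda}\id_A\otimes\sigma_B$. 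Third, $\tr\bar\rho=\tr(\rho_{AB}\,N^{-\half}MN^{-\half})\le\tr\rho_{AB}$, since $0\le N^{-\half}MN^{-\half}\le\Pi\le\id$. Thus once the trace distance is controlled I will have $\bar\rho\in B^\eps(\rho_{AB})$ with the desired min-entropy bound.

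\emph{Trace-distance estimate (the crux).} The relevant measurement operator is $\Lambda:=N^{-\half}MN^{-\half}$, which satisfies $0\le\Lambda\le\id$ and $\id-\Lambda=N^{-\half}\Delta_{AB}N^{-\half}$ on $\mathrm{supp}(N)$. The key estimate is $1-\tr(\rho_{AB}\Lambda)=\tr\bigl(N^{-\half}\rho_{AB}N^{-\half}\,\Delta_{AB}\bigr)\le\tr\Delta_{AB}$, where I again use $N^{-\half}\rho_{AB}N^{-\half}\le\Pi\le\id$ together with $\Delta_{AB}\ge0$. I would then turn this "success probability'' close to $1$ into closeness of $\bar\rho$ to $\rho_{AB}$ through the gentle-measurement lemma (Lemma~\ref{gm}) or, more robustly for the asymmetric filtering above, through the fidelity inequality \reff{fidelity}: lower-bounding $F(\rho_{AB},\bar\rho)\ge 1-\tr\Delta_{AB}$ and feeding this into $\half\|\rho_{AB}-\bar\rho\|_1\le\sqrt{2\,(1-F)}$ gives $\|\rho_{AB}-\bar\rho\|_1\le\sqrt{8\,\tr\Delta_{AB}}\le\eps$. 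The factor $8$ is precisely the product of the $2$ from $\half\|\cdot\|_1$ and the $2$ inside $\sqrt{2(1-F)}$.

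\emph{Main obstacle.} The genuine difficulty is the non-commutativity of $\rho_{AB}$ and $M=2^{-\lambda}\id_A\otimes\sigma_B$. The naive candidates fail: projecting with $\{\rho_{AB}<M\}$ does not give $PMP\le M$, and subtracting the positive part $(\rho_{AB}-M)_+$ need not leave $\bar\rho$ positive. This is why the asymmetric filtering is forced, and it is also what makes the last step delicate, since the clean form of gentle measurement/fidelity is stated for the symmetric object $\sqrt\Lambda\rho_{AB}\sqrt\Lambda$, which does \emph{not} obey $\le M$, while the object that does obey $\le M$ is $M^\half N^{-\half}\rho_{AB}N^{-\half}M^\half$. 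I therefore expect the bulk of the work to lie in bounding $F(\rho_{AB},\bar\rho)$ (equivalently $\|(\id-K)\rho_{AB}^\half\|_2$ with $K=M^\half N^{-\half}$) in the genuinely non-commuting regime; in the commuting case the scalar inequality $(1-\sqrt x)^2\le 1-x$ makes the estimate transparent, and the constant $8$ comfortably absorbs the loss incurred in passing to the general case.
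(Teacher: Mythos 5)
Your construction coincides exactly with the paper's: with $\alpha_{AB}:=2^{-\lambda}\,\id_A\otimes\sigma_B$ and $\beta_{AB}:=\alpha_{AB}+\Delta_{AB}$, the paper sets $T_{AB}:=\alpha_{AB}^{\half}\beta_{AB}^{-\half}$ and smooths $\rho_{AB}$ to $\rho'_{AB}=T_{AB}\rho_{AB}T_{AB}^{\dagger}$, which is precisely your $\bar\rho$, and your verification that $\bar\rho\geq 0$, $\bar\rho\leq 2^{-\lambda}\,\id_A\otimes\sigma_B$ and $\tr\bar\rho\leq\tr\rho_{AB}$ is correct. The genuine gap sits exactly where you yourself place ``the bulk of the work'': you never prove $F(\rho_{AB},\bar\rho)\geq 1-\tr(\Delta_{AB})$. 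The estimate you do supply, $1-\tr(\rho_{AB}\Lambda)\leq\tr(\Delta_{AB})$ with $\Lambda=T_{AB}^{\dagger}T_{AB}$, is nothing more than the statement $\tr(\bar\rho)\geq 1-\tr(\Delta_{AB})$, since $\tr(\rho_{AB}T_{AB}^{\dagger}T_{AB})=\tr(T_{AB}\rho_{AB}T_{AB}^{\dagger})$; a near-unit trace does not imply closeness to $\rho_{AB}$ in trace distance. Because $T_{AB}$ is not Hermitian, one has $\bar\rho=U|T_{AB}|\rho_{AB}|T_{AB}|U^{\dagger}$ for the partial isometry $U$ in the polar decomposition of $T_{AB}$, and it is precisely the rotation $U$ that your trace estimate cannot control---the same reason, as you note, that Lemma~\ref{gm} does not apply to the asymmetric filtering. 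Your closing claim that the commuting case is transparent and that ``the constant $8$ comfortably absorbs the loss incurred in passing to the general case'' is not an argument: no mechanism for that passage is given, and the constant in the lemma plays no such role.

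What fills this gap in the paper is a purification argument whose key quantity is the overlap $\tr(\rho_{AB}T_{AB})$, not $\tr(\rho_{AB}T_{AB}^{\dagger}T_{AB})$. Taking $\ket{\Psi'}:=T_{AB}\otimes\id_R\ket{\Psi}$ for a purification $\ket{\Psi}$ of $\rho_{AB}$, one has $F(\rho_{AB},\bar\rho)\geq|\spr{\Psi}{\Psi'}|=|\tr(\rho_{AB}T_{AB})|$, and the overlap is bounded in two steps: (i) the Hermitian part $\bar T_{AB}:=\half(T_{AB}+T_{AB}^{\dagger})$ satisfies $\bar T_{AB}\leq\id_{AB}$, which requires both $\|T_{AB}\|_{\infty}\leq 1$ (from $\alpha_{AB}\leq\beta_{AB}$) and $\|T_{AB}^{\dagger}\|_{\infty}\leq 1$, the latter via operator monotonicity of $\tau\mapsto-\tau^{-1}$ (so that $\beta_{AB}^{-1}\leq\alpha_{AB}^{-1}$); and (ii) the chain $1-\Re\,\tr(\rho_{AB}T_{AB})=\tr\bigl(\rho_{AB}(\id_{AB}-\bar T_{AB})\bigr)\leq\tr\bigl(\beta_{AB}(\id_{AB}-\bar T_{AB})\bigr)=\tr(\beta_{AB})-\tr(\alpha_{AB}^{\half}\beta_{AB}^{\half})\leq\tr(\beta_{AB})-\tr(\alpha_{AB})=\tr(\Delta_{AB})$, which uses $\rho_{AB}\leq\beta_{AB}$, positivity of $\id_{AB}-\bar T_{AB}$, and operator monotonicity of the square root. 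These two operator-monotonicity steps are the mathematical content your outline is missing; without them (or an equivalent direct bound such as $F(\rho_{AB},T_{AB}\rho_{AB}T_{AB}^{\dagger})=\|\rho_{AB}^{\half}T_{AB}\rho_{AB}^{\half}\|_1\geq|\tr(\rho_{AB}T_{AB})|$ followed by the same estimate on $\Re\,\tr(\rho_{AB}T_{AB})$), the proof is incomplete at its central step.
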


\begin{proof}
  Define
  \begin{align*}
  \alpha_{A B} & := 2^{-\lambda} \cdot \id_A \otimes \sigma_B \\
  \beta_{A B} & := 2^{-\lambda} \cdot \id_A \otimes \sigma_B + \Delta_{A B} \ .
  \end{align*}
 and
  \[
    T_{A B} := \alpha_{A B}^\half \beta_{A B}^{-\half} \ .
  \]
  Let $\ket{\Psi} = \ket{\Psi}_{A B R}$ be a purification of $\rho_{A B}$ and let $\ket{\Psi'} := T_{A B} \otimes \id_R \ket{\Psi}$ and $\rho'_{A B} := \tr_R(\proj{\Psi'})$.

Note that
\begin{align*}
  \rho'_{A B}
& =
  T_{A B} \rho_{A B} T_{A B}^{\dagger} \\
& \leq
  T_{A B} \beta_{A B} T_{A B}^{\dagger} \\
& =
  \alpha_{A B}
=
  2^{-\lambda} \cdot \id_A \otimes \sigma_B \ ,
\end{align*}
which implies $\Hmin(\rho'_{A B}|\sigma_B) \geq \lambda$. It thus remains to be shown that
\begin{equation} \label{eq:distbound}
  \| \rho_{A B} - \rho'_{A B} \|_1
\leq
  \sqrt{8 \tr(\Delta_{A B})}
 \ .
\end{equation}

  We first show that the Hermitian operator
  \[
    \bar{T}_{A B} := \frac{1}{2} (T_{A B} + T_{A B}^\dagger) \ .
  \]
  satisfies
  \begin{equation} \label{eq:Tleqid}
    \bar{T}_{A B} \leq \id_{A B} \ .
 \end{equation}
 For any vector $\ket{\phi} = \ket{\phi}_{A B}$,
 \begin{align*}
   \| T_{A B} \ket{\phi} \|^2
 & =
   \bra{\phi} T_{A B}^\dagger T_{A B} \ket{\phi}
=
   \bra{\phi} \beta_{A B}^{-\half} \alpha_{A B} \beta_{A B}^{-\half} \ket{\phi} \\
 & \leq
   \bra{\phi} \beta_{A B}^{-\half} \beta_{A B} \beta_{A B}^{-\half} \ket{\phi}
 =
   \| \ket{\phi} \|^2
  \end{align*}
 where the inequality follows from $\alpha_{A B} \leq \beta_{A B}$.
 Similarly,
 \begin{align*}
   \| T_{A B} ^\dagger \ket{\phi} \|^2
 & =
   \bra{\phi} T_{A B} T_{A B}^\dagger \ket{\phi}
 =
   \bra{\phi} \alpha_{A B}^\half \beta_{A B}^{-1} \alpha_{A B}^{\half} \ket{\phi} \\
 & \leq
   \bra{\phi} \alpha_{A B}^{\half} \alpha_{A B}^{-1} \alpha_{A B}^{\half} \ket{\phi}
 =
   \| \ket{\phi} \|^2
  \end{align*}
 where the inequality follows from the fact that $\beta_{A B}^{-1} \leq \alpha_{A B}^{-1}$ which holds because the function $\tau \mapsto -\tau^{-1}$ is operator monotone on $(0, \infty)$ (see Proposition V.1.6 of \cite{bhatia}). We conclude that for any vector $\ket{\phi}$,
 \begin{align*}
  \| \bar{T}_{A B} \ket{\phi} \|
& \leq
   \frac{1}{2} \| T_{A B} \ket{\phi} + T_{A B}^{\dagger} \ket{\phi} \| \\
 & \leq
   \frac{1}{2} \| T_{A B} \ket{\phi} \| + \frac{1}{2} \| T_{A B}^{\dagger} \ket{\phi} \|
 \leq
   \| \ket{\phi} \| \ ,
 \end{align*}
 which implies~\eqref{eq:Tleqid}.

  We now determine the overlap between $\ket{\Psi}$ and $\ket{\Psi'}$,  \begin{align*}
    \spr{\Psi}{\Psi'}
   & =
    \bra{\Psi}  T_{A B} \otimes \id_R \ket{\Psi} \\
   & =
    \tr(\proj{\Psi} T_{A B} \otimes \id_R)
   =
    \tr(\rho_{A B} T_{A B}) \ .
 \end{align*}
 Because $\rho_{A B}$ has trace one, we have
\begin{align*}
    1 - |\spr{\Psi}{\Psi'}|
  & \leq
    1- \Re \spr{\Psi}{\Psi'}
  =
    \tr\bigl(\rho_{A B} (\id_{A B} - \bar{T}_{A B}) \bigr) \\
   & \leq
     \tr\bigl(\beta_{A B}  (\id_{A B} - \bar{T}_{A B})\bigr) \\
   & =
     \tr(\beta_{A B}) - \tr(\alpha_{A B}^{\half} \beta_{A B}^{\half}) \\
   & \leq
     \tr(\beta_{A B}) - \tr(\alpha_{A B})
   =
     \tr(\Delta_{A B}) \ .
  \end{align*}
  Here, the second inequality follows from the fact that, because of~\eqref{eq:Tleqid}, the operator $\id_{AB} - \bar{T}_{A B}$ is positive and $\rho_{A B} \leq \beta_{A B}$. The last inequality holds because $\alpha_{A B}^{\half} \leq \beta_{A B}^{\half}$, which is a consequence of the operator monotonicity of the square root (Proposition V.1.8 of \cite{bhatia}).

Using \reff{fidelity} and the fact that the fidelity between two pure states is given by their overlap, we find
\begin{align*}
  \| \proj{\Psi} - \proj{\Psi'} \|_1
& \leq
  2 \sqrt{2(1-| \spr{\Psi}{\Psi'} |)} \\
& \leq
  2 \sqrt{2 \tr(\Delta_{A B})}
\leq
  \eps \ .
\end{align*}
Inequality~\eqref{eq:distbound} then follows because the trace distance can only decrease when taking the partial trace.
\end{proof}

\begin{lemma} \label{lem:Hminepsprojbound}
  Let $\rho_{A B}$ and $\sigma_B$ be density operators. Then
  \[
    \Hmin^{\eps}(\rho_{A B}|\sigma_B) \geq \lambda
  \]
  for any $\lambda \in \bbR$ and
  \[
    \eps = \sqrt{8 \tr\bigl(\{\rho_{A B} > 2^{-\lambda} \cdot \id_A \otimes \sigma_B \} \rho_{A B} \bigr)} \ .
  \]
\end{lemma}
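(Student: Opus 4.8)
The plan is to obtain this as an immediate consequence of Lemma~\ref{lem:Hminepsaddbound} by constructing the right positive perturbation $\Delta_{A B}$. Set $M_{A B} := \rho_{A B} - 2^{-\lambda} \cdot \id_A \otimes \sigma_B$, a self-adjoint operator, and take $\Delta_{A B} := \{\rho_{A B} \geq 2^{-\lambda} \cdot \id_A \otimes \sigma_B\} \, M_{A B}$ to be its positive part. This is a positive operator, and since discarding the negative part of $M_{A B}$ can only increase it we have $\rho_{A B} = 2^{-\lambda} \cdot \id_A \otimes \sigma_B + M_{A B} \leq 2^{-\lambda} \cdot \id_A \otimes \sigma_B + \Delta_{A B}$. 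Thus the hypothesis of Lemma~\ref{lem:Hminepsaddbound} is satisfied with this choice of $\Delta_{A B}$.

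It then remains only to control $\tr(\Delta_{A B})$ by the quantity appearing in the statement. Writing $\Pi := \{\rho_{A B} > 2^{-\lambda} \cdot \id_A \otimes \sigma_B\}$, which agrees with $\{\rho_{A B} \geq 2^{-\lambda} \cdot \id_A \otimes \sigma_B\}$ except on the kernel of $M_{A B}$ (where $M_{A B}$ contributes nothing to the trace), I would compute $\tr(\Delta_{A B}) = \tr(\Pi M_{A B}) = \tr(\Pi \rho_{A B}) - 2^{-\lambda} \tr\bigl(\Pi (\id_A \otimes \sigma_B)\bigr)$. The key observation is that the subtracted term is nonnegative, being the trace of a product of two positive operators, so that $\tr(\Delta_{A B}) \leq \tr(\Pi \rho_{A B}) = \tr\bigl(\{\rho_{A B} > 2^{-\lambda} \cdot \id_A \otimes \sigma_B\} \rho_{A B}\bigr)$.

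To finish, I would apply Lemma~\ref{lem:Hminepsaddbound}, which gives $\Hmin^{\eps}(\rho_{A B}|\sigma_B) \geq \lambda$ for every $\eps \geq \sqrt{8 \tr(\Delta_{A B})}$. Since $\sqrt{8 \tr(\Delta_{A B})} \leq \sqrt{8 \tr(\Pi \rho_{A B})}$, the value $\eps = \sqrt{8 \tr\bigl(\{\rho_{A B} > 2^{-\lambda} \cdot \id_A \otimes \sigma_B\} \rho_{A B}\bigr)}$ already lies in the admissible range, which is exactly the claimed statement.

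There is no serious obstacle here, as the lemma is essentially a corollary of the preceding one; the only point demanding mild care is the trace estimate, where one should resist the temptation to evaluate $2^{-\lambda}\tr\bigl(\Pi(\id_A \otimes \sigma_B)\bigr)$ and instead simply discard it using positivity of $\tr$ of a product of positive operators.
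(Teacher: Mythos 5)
Your proof is correct and follows essentially the same route as the paper: both take $\Delta_{A B}$ to be the positive part of $\rho_{A B} - 2^{-\lambda}\cdot \id_A \otimes \sigma_B$, bound $\tr(\Delta_{A B}) \leq \tr\bigl(\{\rho_{A B} > 2^{-\lambda}\cdot \id_A \otimes \sigma_B\}\rho_{A B}\bigr)$, and then invoke Lemma~\ref{lem:Hminepsaddbound}. The only cosmetic difference is that you obtain the trace bound by a direct scalar computation (discarding the nonnegative term $2^{-\lambda}\tr\bigl(\Pi(\id_A\otimes\sigma_B)\bigr)$), whereas the paper derives it from the operator inequality $P_{A B}\rho_{A B}P_{A B} \geq \Delta^+_{A B}$; the two are equivalent.
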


\begin{proof}
  Let $\Delta^+_{A B}$ and $\Delta^-_{A B}$ be mutually orthogonal positive operators such that
  \[
    \Delta^+_{A B} - \Delta^-_{A B} = \rho_{A B} - 2^{-\lambda} \cdot \id_A \otimes \sigma_B \ .
  \]
  Furthermore, let $P_{A B}$ be the projector onto the support of $\Delta^+_{A B}$, i.e.,
  \[
    P_{A B} = \{\rho_{A B} > 2^{-\lambda} \cdot \id_A \otimes \sigma_B \} \ .
  \]
  We then have
  \begin{align*}
    P_{A B} \rho_{A B} P_{A B}
  & =
    P_{A B} (2^{-\lambda} \cdot \id_A \otimes \sigma_B + \Delta^+_{A B} - \Delta^-_{A B}) P_{A B} \\
  \geq
    \Delta^{+}_{A B}
  \end{align*}
  and, hence,
  \[
    \sqrt{8 \tr(\Delta^{+}_{A B})}
  \leq
    \sqrt{8 \tr(P_{A B} \rho_{A B})} = \eps \ .
  \]
  The assertion now follows from Lemma~\ref{lem:Hminepsaddbound} because
  \[
    \rho_{A B} \leq 2^{-\lambda} \cdot \id_A \otimes \sigma_B + \Delta^+_{A B} \ .
  \]
\end{proof}

In the following sections we state and prove the relations between the
conditional spectral entropy rates and the smooth conditional max- and
min-entropy.

\subsection{Relation between $\overline{S}(A|B)$ and $H_{\max}^\eps(\rho_{A B} | \rho_B)$}

\begin{theorem}
\label{thm_condmax}
Given a sequence of bipartite states $\hrho^{AB}=\{\rho_n^{AB}\}_{n=1}^\infty$, where
$\rho_n^{AB}\in {\cal{B}}\bigl(({\cal{H}}_A \otimes {\cal{H}}_B)^{\otimes n} \bigr)$,
the sup-spectral conditional entropy rate $\oS(A|B)$, defined by \reff{oscond}, satisfies
%is related to the smooth conditional max-entropy as follows:
\be
\oS(A|B)= \lim_{\eps \rightarrow 0} \limsup_{n \rightarrow \infty}
\frac{1}{n} H^\eps_{\max}(\rho_n^{AB}|\rho_n^{B}),
\label{main4}
\ee
where $ H^\eps_{\max}(\rho_n^{AB}|\rho_n^{B})$ is the smooth max-entropy of the state
$\rho_n^{AB}$ of the sequence, conditional on the corresponding reduced state $\rho_n^B$.
\end{theorem}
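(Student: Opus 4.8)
The plan is to follow the proof of Theorem~\ref{max} almost verbatim, replacing the identity $I_n$ everywhere by the positive operator $I_n^A \otimes \rho_n^B$, so that the projectors $P_n^\gamma = \{\rho_n^{AB} \ge 2^{-n\gamma}\, I_n^A \otimes \rho_n^B\}$ of \reff{proj21} take over the role previously played by $\{\rho_n \ge 2^{-n\gamma} I_n\}$. The single fact that makes this substitution legitimate is that Lemma~\ref{lem2} is stated for an \emph{arbitrary} positive operator; applied with $\omega_n = I_n^A \otimes \rho_n^B$ it yields $\tr\big[P_n^\gamma (I_n^A \otimes \rho_n^B)\big] \le 2^{n\gamma}$, the conditional analogue of the dimension bound $\tr P_n^\gamma \le 2^{n\gamma}$ used in Theorem~\ref{max}. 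The lemmas~\ref{lem:Hminepsaddbound} and~\ref{lem:Hminepsprojbound} are not needed here; they are reserved for the conditional min-entropy.

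For the direction $\lim_{\eps\to 0}\limsup_n \frac{1}{n} H_{\max}^\eps(\rho_n^{AB}|\rho_n^B) \le \oS(A|B)$, I would fix $\alpha>0$, set $\gamma = \oS(A|B)+\alpha$, and invoke \reff{oscond} to get $\tr[P_n^\gamma \rho_n^{AB}] > 1-\delta$ for $n$ large. The gentle-measurement Lemma~\ref{gm} then places $\trho := P_n^\gamma \rho_n^{AB} P_n^\gamma$ inside $B^\eps(\rho_n^{AB})$ with $\eps = 2\sqrt{\delta}$. Letting $\tilde\pi_n$ be the projector onto the support of $\trho$, we have $\tilde\pi_n \le P_n^\gamma$, so the definition of the conditional max-entropy gives
\bea
H_{\max}^\eps(\rho_n^{AB}|\rho_n^B) &\le& H_{\max}(\trho|\rho_n^B) \nonumber\\
&=& \log\tr\big[\tilde\pi_n (I_n^A\otimes\rho_n^B)\big] \nonumber\\
&\le& \log\tr\big[P_n^\gamma (I_n^A\otimes\rho_n^B)\big] \nonumber\\
&\le& n\gamma,
\eea
the last step being Lemma~\ref{lem2}. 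Dividing by $n$, taking $\limsup_n$, then $\eps\to0$ and finally $\alpha\to0$ yields the bound.

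For the converse I would argue by contradiction, following \reff{osec}--\reff{compress}. Suppose the limit is strictly below $\oS(A|B)$, and let $\sigma_{n,\eps}$ attain the infimum, with support projector $\sigma_{n,\eps}^0$, so that $H_{\max}^\eps(\rho_n^{AB}|\rho_n^B) = \log\tr[\sigma_{n,\eps}^0 (I_n^A\otimes\rho_n^B)]$. The computation \reff{upb4}, which uses only Lemma~\ref{lem1} and $\sigma_{n,\eps}\in B^\eps(\rho_n^{AB})$ and is therefore insensitive to the conditioning, carries over with $\rho_n$ replaced by $\rho_n^{AB}$ and gives $\tr[\sigma_{n,\eps}^0 \rho_n^{AB}] \ge 1-2\eps$. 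The contradiction comes from a conditional version of the compression bound \reff{compress}: for any $R<\oS(A|B)$ and any projector $\pi_n$ with $\tr[\pi_n (I_n^A\otimes\rho_n^B)] \le 2^{nR}$, one has $\tr[\pi_n \rho_n^{AB}] \le 1-c_0$ along a subsequence. This is shown exactly as in \reff{upb5}: splitting $\tr[\pi_n\rho_n^{AB}] = \tr[\pi_n(\rho_n^{AB}-2^{-n\beta} I_n^A\otimes\rho_n^B)] + 2^{-n\beta}\tr[\pi_n (I_n^A\otimes\rho_n^B)]$, bounding the first term by Lemma~\ref{lem1} so that it is at most $\tr[P_n^\beta \rho_n^{AB}]$, and choosing $\oS(A|B) > \beta > R$ so that the second term $2^{-n(\beta-R)}$ vanishes while the first stays bounded away from $1$ along a subsequence, because $\beta<\oS(A|B)$ forces $\liminf_n \tr[P_n^\beta \rho_n^{AB}] < 1$ through \reff{oscond}. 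Applying this with $\pi_n = \sigma_{n,\eps}^0$ (whose weighted trace is below $2^{nR}$ under the assumption) contradicts $\tr[\sigma_{n,\eps}^0 \rho_n^{AB}] \ge 1-2\eps$ once $\eps$ is small.

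The main obstacle, and the only genuinely new point compared with Theorem~\ref{max}, is that the conditional max-entropy is the logarithm of the \emph{weighted} trace $\tr[\pi (I_n^A\otimes\rho_n^B)]$ rather than $\log\rank\pi$. Getting the proof right is therefore a matter of consistently treating this weighted trace as the effective ``size'' of the support: it is the quantity Lemma~\ref{lem2} bounds from above in the direct part, and the quantity that must play the role of $2^{nR}$ in the compression estimate of the converse. Once $\omega_n = I_n^A\otimes\rho_n^B$ is recognized as the correct replacement for $I_n$ in both places, the argument reduces to that of the non-conditional case.
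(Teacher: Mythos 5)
Your proposal is correct and takes essentially the same route as the paper's own proof: the direct part via gentle measurement applied to $P_n^\gamma \rho_n^{AB} P_n^\gamma$ followed by Lemma~\ref{lem2} with $\omega_n = I_n^A \otimes \rho_n^B$, and the converse by contradiction, playing the Lemma~\ref{lem1} estimate $\tr\bigl[\pi_{n,\eps}^{AB}\rho_n^{AB}\bigr] \ge 1-2\eps$ against the conditional compression bound with the weighted trace $\tr\bigl[\pi_n (I_n^A\otimes\rho_n^B)\bigr]$ in the role of $2^{nR}$. You also correctly identify that Lemmas~\ref{lem:Hminepsaddbound} and~\ref{lem:Hminepsprojbound} are not needed here, exactly as in the paper, where they enter only in the proof of Theorem~\ref{thm_condmin}.
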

\begin{proof}
From the definition \reff{oscond} of $\oS(A|B)$ it follows that for any $\gamma \ge \oS(A|B)$ and any $\delta >0$, for $n$ large enough
\be\tr \big[P_n^\gamma  \rho_n^{AB}\big] > 1 - \delta,\label{one3} \ee
where $P_n^\gamma$ is defined by \reff{proj21}.

 For {{any}} given $\alpha >0$, choose $\gamma= \oS(A|B) + \alpha$, and let
\be {{\rho}}_{n,\gamma}^{AB} := P_n^\gamma  \rho_n^{AB} P_n^\gamma
\label{otrho2}
\ee
Then using (\ref{one3})
and Lemma \ref{gm} we infer that, for $n$ large enough,
${{\rho}}_{n,\gamma}^{AB} \in B^\eps(\rho_n^{AB})$ with
$\eps = 2{\sqrt{\delta}}$. Let ${\pi}_{n,\gamma}^{AB} $ denote the projection
onto the support of ${{\rho}}_{n,\gamma}^{AB} $.

We first prove bound
\be
\lim_{\eps \rightarrow 0} \limsup_{n \rightarrow \infty}
\frac{1}{n} H^\eps_{\max}(\rho_n)\le \oS(A|B).
\label{ofirst3}
\ee

For $n$ large enough,
\bea
H_{\max}^\eps( \rho_n^{AB}|\rho_n^B)
&:=& \inf_{\orho_n\in B^\eps(\rho_n^{AB})}
H_{\max}(\orho_n^{AB}|\rho_n^B)\nonumber\\
&\le & H_{\max}({{\rho}}_{n,\gamma}^{AB}|\rho_n^B)\nonumber\\
&=& \log \tr \bigl((I_n^A \otimes \rho_n^{B}){{\pi}}_{n,\gamma}^{AB} \bigr) \nonumber\\
&\le & \log \tr \bigl((I_n^A \otimes \rho_n^{B})P_n^{\gamma}\bigr)\nonumber\\
&\le & n\gamma
\label{oconverse3}
\eea
%where $\pi_n^{\gamma,AB}$ denotes the projector onto the support of $\rho_n^{\gamma,AB}$.
The last inequality in \reff{oconverse3} follows from Lemma \ref{lem2}. Hence, for $n$ large
enough,
\be
\frac{1}{n} H_{\max}^\eps( \rho_n^{AB}|\rho_n^B)
\le \gamma = \oS(A|B) + \alpha,
\ee
and since $\alpha$ is arbitrary, we obtain the desired bound \reff{ofirst3}.

To complete the proof of Theorem \ref{thm_condmax}, we assume that
\be
\lim_{\eps \rightarrow 0} \limsup_{n \rightarrow \infty}
\frac{1}{n} H^\eps_{\max}(\rho_n^{AB}|\rho_n^B)< \oS(A|B),
\label{ass2}
\ee
and prove that this leads to a contradiction.
%%%%%%%% begin insertion %%%%%%%%%%%%%
Let $\sigma_{n,\eps}^{AB}$ be the operator for which
\be
H_{\max} (\sigma_{n,\eps}^{AB}|\rho_n^{B}) = \inf_{\orho^{AB} \in B^\eps(\rho_n^{AB})}H_{\max} (\orho^{AB}|\rho_n^{B}).\ee
Hence,
\bea H^\eps_{\max}(\rho_n^{AB}|\rho_n^{B})&=&
H_{\max} (\sigma_{n,\eps}^{AB}|\rho_n^{B})\nonumber\\
&=& \log \tr\bigl((I_n^A \otimes \rho_n^B) \pi_{n,\eps}^{AB} \bigr),
\eea
where $\pi_{n,\eps}^{AB}$ is the projection onto the support of
$\sigma_{n,\eps}^{AB}$.

Hence, the assumption \reff{ass2} is equivalent to the following assumption:
\be
\lim_{\eps \rightarrow 0} \limsup_{n \rightarrow \infty}
\frac{1}{n} \log \tr \bigl[\pi_{n,\eps}^{AB}(I_n^A \otimes \rho_n^B)\bigr]< \oS(A|B).
\label{ass3}
\ee

Note that
\bea
&&
\tr(\pi_{n,\eps}^{AB}  \rho_n^{AB}) \nonumber\\
&=&
\tr \bigl[ \bigl((\rho_n^{AB} - \sigma_{n,\eps}^{AB})
+ \sigma_{n,\eps}^{AB}\bigr)\pi_{n,\eps}^{AB}\bigr]\nonumber\\
&=& \tr \bigl[(\rho_n^{AB} - \sigma_{n,\eps}^{AB}) \pi_{n,\eps}^{AB}\bigr] + \tr \sigma_{n,\eps}^{AB}\nonumber\\
&\ge& \tr \bigl[ \{\rho_n^{AB} \le \sigma_{n,\eps}^{AB}\} (\rho_n^{AB} - \sigma_{n,\eps}^{AB})\bigr] + \tr \bigl[\sigma_{n,\eps}^{AB}\bigr]\nonumber\\
&\ge & - \eps + 1 - \eps = 1- 2\eps.
\label{new1}
\eea
We arrive at the second last line of \reff{new1} using Lemma \ref{lem1}. The last line of \reff{new1} is obtained analogously to \reff{upb4},
since $\sigma_{n,\eps}^{AB} \in B^\eps(\rho_n^{AB})$.

%%%%%%%% INSERT PROOF %%%%%%%%%%%%%%%%

Note, however, that \reff{new1} leads to a contradiction.
This can be seen as follows: Let $R$
be a real number satisfying
$$\tr \bigl[\pi_{n,\eps}^{AB}(I_n^A \otimes \rho_n^B)\bigr] = 2^{n R}.$$
It follows from the assumption \reff{ass3} that, for $\eps$ small enough, $R < \oS(A|B)$. Note that
\bea
&&
\tr(\pi_{n,\eps}^{AB} \rho_n^{AB}) \nonumber \\
&=& \tr\bigl[\pi_{n,\eps}^{AB} (\rho_n^{AB} - 2^{-n\gamma} I_n^A \otimes \rho_n^B)\bigr] \nonumber\\
& &  + 2^{-n\gamma} \tr \bigl[\pi_{n,\eps}^{AB}(I_n^A \otimes \rho_n^B)\bigr]\nonumber\\
&\le & \tr\bigl[\{\rho_n^{AB} \ge 2^{-n\gamma} I_n^A \otimes \rho_n^B\}(\rho_n^{AB} - 2^{-n\gamma} I_n^A \otimes \rho_n^B) \bigr] \nonumber\\
& & + 2^{-n(\gamma - R)}
\nonumber\\
\label{upb7}
\eea
Choose $\oS(A|B) > \gamma > R$. For such a choice, the second term on the right  hand side of
\reff{upb7} tends to zero asymptotically in $n$. However, the first term does not tend to $1$
and we hence obtain the bound
\be
\tr(\pi_{n,\eps}^{AB} \rho_n^{AB}) < 1 - c_0,
\label{compress2}
\ee
for some constant $c_0>0$. This contradicts \reff{new1} in the limit $\eps \rightarrow 0$.
\end{proof}
%%%%%%% end insertion %%%%%%%%%%%%
%\end{document}

\subsection{Relation between $\underline{S}(A|B)$ and$H_{\min}^\eps(\rho_{A B} | \rho_B)$}

\begin{theorem}
\label{thm_condmin}
Given a sequence of bipartite states $\hrho^{AB}=\{\rho_n^{AB}\}_{n=1}^\infty$, where
$\rho_n^{AB}\in {\cal{B}}\bigl(({\cal{H}}_A \otimes {\cal{H}}_B)^{\otimes n} \bigr)$,
the inf-spectral conditional entropy rate $\uS(A|B)$
is related to the smooth conditional min-entropy as follows:
\be
\uS(A|B)= \lim_{\eps \rightarrow 0} \liminf_{n \rightarrow \infty}
\frac{1}{n} H^\eps_{\min}(\rho_n^{AB}|\rho_n^{B})
\label{main2}
\ee
\end{theorem}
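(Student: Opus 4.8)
The plan is to prove \reff{main2} by establishing the two inequalities separately, following the template of the non-conditional Theorem \ref{thm:relSHmin}, but with one essential modification. In the non-conditional proof the nearby state $\trho = Q_n^\gamma \rho_n Q_n^\gamma$ was produced by gentle measurement (Lemma \ref{gm}); in the conditional setting the analogous pinching does not have the required conditional min-entropy, because $\rho_n^{AB}$ and $I_n^A\otimes\rho_n^B$ do not commute and so $Q_n^\gamma\rho_n^{AB}Q_n^\gamma \leq 2^{-n\gamma}\,I_n^A\otimes\rho_n^B$ fails in general. The role of gentle measurement is therefore played by Lemmas \ref{lem:Hminepsaddbound} and \ref{lem:Hminepsprojbound}, whose operator $T_{AB}=\alpha_{AB}^\half\beta_{AB}^{-\half}$ is precisely the device that manufactures a nearby (subnormalized) state with a controlled conditional min-entropy.

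For the bound $\uS(A|B) \leq \lim_{\eps\to 0}\liminf_{n}\frac{1}{n}H^\eps_{\min}(\rho_n^{AB}|\rho_n^B)$, I would fix $\gamma < \uS(A|B)$. By the definition \reff{uscond} of the inf-spectral conditional rate, $\limsup_n \tr[P_n^\gamma\rho_n^{AB}]=0$ with $P_n^\gamma$ given by \reff{proj21}, and since $\{\rho_n^{AB} > 2^{-n\gamma}I_n^A\otimes\rho_n^B\}\leq P_n^\gamma$ the same holds for the strict spectral projection. Applying Lemma \ref{lem:Hminepsprojbound} with $\lambda = n\gamma$ and $\sigma_B = \rho_n^B$ then gives $H^{\eps_n}_{\min}(\rho_n^{AB}|\rho_n^B)\geq n\gamma$ with $\eps_n = \sqrt{8\,\tr[\{\rho_n^{AB} > 2^{-n\gamma}I_n^A\otimes\rho_n^B\}\rho_n^{AB}]}\to 0$. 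Monotonicity of the smooth min-entropy in the smoothing parameter then yields, for every fixed $\eps>0$ and all large $n$, $H^\eps_{\min}(\rho_n^{AB}|\rho_n^B)\geq n\gamma$; taking $\liminf_n$, then $\eps\to 0$, and finally the supremum over $\gamma<\uS(A|B)$ delivers the claim.

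For the converse I would argue by contradiction, paralleling the proof of Theorem \ref{thm_condmax}. Assume $\uS(A|B) < \lim_{\eps\to 0}\liminf_n \frac{1}{n}H^\eps_{\min}(\rho_n^{AB}|\rho_n^B)$ and fix $\gamma,\beta$ with $\uS(A|B) < \gamma < \beta$ below that limit. Let $\sigma_{n,\eps}^{AB}\in B^\eps(\rho_n^{AB})$ be the optimizer, so that $\sigma_{n,\eps}^{AB}\leq 2^{-H^\eps_{\min}(\rho_n^{AB}|\rho_n^{B})}\,I_n^A\otimes\rho_n^B$; for $\eps$ small and $n$ large this gives $\sigma_{n,\eps}^{AB}\leq 2^{-n\beta}\,I_n^A\otimes\rho_n^B$. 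Splitting $\tr(P_n^\gamma\rho_n^{AB}) = \tr(P_n^\gamma\sigma_{n,\eps}^{AB}) + \tr[P_n^\gamma(\rho_n^{AB}-\sigma_{n,\eps}^{AB})]$, I would bound the second term by $\eps$ using Corollary \ref{cor1}, write the first as $\tr[P_n^\gamma(\sigma_{n,\eps}^{AB}-2^{-n\beta}I_n^A\otimes\rho_n^B)] + 2^{-n\beta}\tr[P_n^\gamma(I_n^A\otimes\rho_n^B)]$, control the leading piece by $0$ through Lemma \ref{lem1} since $\sigma_{n,\eps}^{AB}-2^{-n\beta}I_n^A\otimes\rho_n^B$ is negative semidefinite, and control the last piece by $2^{-n(\beta-\gamma)}$ through Lemma \ref{lem2}. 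Letting $n\to\infty$ gives $\limsup_n\tr(P_n^\gamma\rho_n^{AB})\leq\eps$ for all small $\eps$, hence $=0$, so $\gamma\leq\uS(A|B)$, contradicting $\gamma>\uS(A|B)$.

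The main obstacle is the upper bound, and specifically the construction of the nearby state: the naive conditional pinching fails because of the non-commutativity noted above, so the whole argument hinges on having Lemmas \ref{lem:Hminepsaddbound} and \ref{lem:Hminepsprojbound} available to produce a state in $B^\eps(\rho_n^{AB})$ with conditional min-entropy at least $n\gamma$ and a smoothing parameter that vanishes in $n$. Secondary care is needed with the kernel of $\rho_n^B$, on which any $\sigma\leq 2^{-\lambda}I^A\otimes\rho_n^B$ must vanish, and with the distinction between strict and non-strict spectral projections, but these points are routine once the key lemmas are in place.
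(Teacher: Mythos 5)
Your proposal is correct, and in substance it is the paper's own proof. The direction $\uS(A|B)\le\lim_{\eps\to0}\liminf_n\frac{1}{n}H_{\min}^\eps(\rho_n^{AB}|\rho_n^B)$ is argued exactly as in the paper's proof of \reff{second6}: take $\gamma$ below the spectral rate so that $\tr[P_n^\gamma\rho_n^{AB}]$ is eventually small, pass to the strict projection, and feed this into Lemma \ref{lem:Hminepsprojbound} with $\lambda=n\gamma$ and $\sigma_B=\rho_n^B$; your observation that non-commutativity of $\rho_n^{AB}$ and $I_n^A\otimes\rho_n^B$ is what breaks the gentle-measurement pinching and forces Lemmas \ref{lem:Hminepsaddbound} and \ref{lem:Hminepsprojbound} is precisely the point of the paper's construction.

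The only divergence is the packaging of the other direction, \reff{second4}. The paper introduces the auxiliary rate $\underline{\Upsilon}^\eps(A|B)$ for the sequence of optimizers $\sigma_{n,\eps}^{AB}$ and proves Lemma \ref{above4} in two steps, whereas you argue by contradiction: with $\uS(A|B)<\gamma<\beta$ below the assumed limit, the optimizer satisfies the exact operator inequality $\sigma_{n,\eps}^{AB}\le 2^{-n\beta}\,I_n^A\otimes\rho_n^B$, which makes the leading term in the splitting of $\tr[P_n^\gamma\rho_n^{AB}]$ vanish identically. The computation itself — bounding $\tr[P_n^\gamma(\rho_n^{AB}-\sigma_{n,\eps}^{AB})]\le\eps$ by Corollary \ref{cor1} and $2^{-n\beta}\tr[P_n^\gamma(I_n^A\otimes\rho_n^B)]\le 2^{-n(\beta-\gamma)}$ by Lemma \ref{lem2} — is identical to the paper's chain \reff{long4}; your version is marginally cleaner because the first term is exactly zero rather than only asymptotically small, so the $\underline{\Upsilon}^\eps$ bookkeeping never needs to be set up (and, as you note, positivity alone suffices there, Lemma \ref{lem1} being optional). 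Both arguments rest on the same two pillars, so this is the same proof in slightly tighter clothing; the one routine point you lean on implicitly, downward closure of the set in \reff{uscond} (so that every $\gamma<\uS(A|B)$, not just a near-optimal one, has vanishing $\limsup$), follows from the same Lemma \ref{lem1}/Lemma \ref{lem2} manipulation and is also used without comment by the paper.
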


\begin{proof}

We first prove the bound
\be
\uS(A|B) \ge \lim_{\eps \rightarrow 0} \liminf_{n \rightarrow \infty}
\frac{1}{n} H^\eps_{\min}(\rho_n^{AB}|\rho_n^{B})
\label{second4}
\ee

Let $\sigma_{n,\eps}^{AB}$ be the operator for which
\be
H_{\min} (\sigma_{n,\eps}^{AB}|\rho_n^{B}) = \max_{\orho^{AB} \in B^\eps(\rho_n^{AB})}H_{\min} (\orho^{AB}|\rho_n^{B}).
\label{orhocond}
\ee
Let us define
\bea
&&
\underline{\Upsilon}^{\eps}(A|B)\nonumber\\
&:=& \sup \Big\{ \alpha : \limsup_{n\rightarrow \infty} \mathrm{Tr}\big[
\{\sigma_{n,\eps}^{AB} \ge 2^{-n\alpha} I_n^A \otimes \rho_n^B\}
\Pi_n^\alpha\big] = 0 \Big\},\nonumber\\
\label{ue}
\eea
where $\Pi_n^\alpha:= \sigma_{n,\eps}^{AB}- 2^{-n\alpha} I_n^A \otimes \rho_n^B$.

According to Definition~\ref{def:smoothentropies} of the conditional smooth min-entropy, that to prove (\ref{second4}), it suffices to prove the
following lemma:
\begin{lemma}
\label{above4}
For any sequence of bipartite states $\hrho^{AB} = \{\rho_n^{AB}\}_{n=1}^\infty$, and any $\eps >0$,
there exists an $n_0 \in \mathbb{N}$, such that for all $n \ge n_0$
\be
\uS(A|B) \ge -\frac{1}{n} \log \bigl[ \{\min \{\lambda :
\sigma_{n,\eps}^{AB} \le \lambda I_n^A \otimes \rho_n^B\}\}\bigr],
\label{last4}
\ee
with $\sigma_{n,\eps}^{AB}$ defined by \reff{orhocond}.
\end{lemma}

\begin{proof}
We prove this lemma in two steps. We first prove that for any $\eps >0$ and $n$ large enough,
\be
\underline{\Upsilon}^{\eps}(A|B) \ge-\frac{1}{n} \log\bigl[\min \{\lambda : \sigma_{n,\eps}^{AB} \le \lambda I_n^A \otimes \rho_n^B\}\bigr].
\label{step14}
\ee
We then prove that
\be
\uS(A|B) \ge \lim_{\eps \rightarrow 0} \underline{\Upsilon}^{\eps}(A|B).
\label{step24}
\ee
{\em{Proof of \reff{step14}:}} For any arbitrary $\eta >0$, let $\alpha$ be defined through
the relation
\be
\min \{\lambda:  \sigma_{n,\eps}^{AB} \le \lambda I_n^A \otimes \rho_n^B\} = 2^{- n (\alpha + \eta)}.
\label{124}
\ee
Hence,
\be
-\frac{1}{n} \log \bigl[ \min \{\lambda:  \sigma_{n,\eps}^{AB} \le \lambda I_n^A \otimes \rho_n^B\} \bigr] = \alpha + \eta
\label{alpha14}
\ee
Note that \reff{124} implies that $\sigma_{n,\eps}^{AB} \le  2^{- n (\alpha + \eta)} (I_n^A \otimes \rho_n^B),$ and
hence $(\sigma_{n,\eps}^{AB} - 2^{- n (\alpha + \eta)} I_n^A \otimes \rho_n^B) \le 0 $. This in turn implies
that $(\sigma_{n,\eps}^{AB} - 2^{- n \alpha } I_n^A \otimes \rho_n^B) \le 0 $ and hence
\be
\tr\bigl[ \{\sigma_{n,\eps}^{AB} \ge 2^{- n \alpha } I_n^A \otimes \rho_n^B\} (\sigma_{n,\eps}^{AB} - 2^{- n \alpha } I_n^A \otimes \rho_n^B)\bigr]=0.
\ee
It then follows from the definition \reff{ue} of $\underline{\Upsilon}^{\eps}(A|B)$ that $\alpha \le \underline{\Upsilon}^{\eps}(A|B)$. Hence, using \reff{alpha14},
we get
\be
-\frac{1}{n} \log \bigl[ \min \{\lambda : \sigma_{n,\eps}^{AB} \le \lambda I_n^A \otimes \rho_n^B \}\bigr]- \eta \le\underline{\Upsilon}^{\eps}(A|B),
\ee
which in turn yields \reff{step14}, since $\eta$ is arbitrary.
\smallskip

\noindent
{\em{Proof of \reff{step24}:}} Defining $P_n^\gamma := \{\rho_n^{AB} \ge 2^{-n\gamma}
I_n^A \otimes \rho_n^B\}$, note that
\bea
&&
\tr\bigl[ P_n^\gamma \rho_n^{AB}\bigr]\nonumber\\
&=& \tr\bigl[ P_n^\gamma\sigma_{n,\eps}^{AB}\bigr]
+  \tr\bigl[ P_n^\gamma( \rho_n^{AB}- \sigma_{n,\eps}^{AB})\bigr]
\nonumber\\
&\le&  \tr\bigl[P_n^\gamma (\sigma_{n,\eps}^{AB}
- 2^{-n\alpha} (I_n^A \otimes \rho_n^B)\bigr]\nonumber\\
&& +2^{-n\alpha} \tr\bigl[ P_n^\gamma (I_n^A \otimes \rho_n^B)\bigr] +
\eps \nonumber\\
&\le & \tr\bigl[ \{ \sigma_{n,\eps}^{AB} \ge 2^{-n\alpha} I_n^A \otimes \rho_n^B\} (\sigma_{n,\eps}^{AB}
- 2^{-n\alpha} (I_n^A \otimes \rho_n^B)\bigr]\nonumber\\
&& + 2^{-n(\alpha - \gamma)} + \eps
\label{long4}
\eea
In the above we have made use of Lemma \ref{lem1}, Lemma \ref{lem2} and Corollary \ref{cor1}.

Let us choose $\gamma = \alpha - \delta/2$, for an arbitrary $\delta>0$, with
$\alpha= \underline{\Upsilon}^{\eps}(A|B) - \delta/2$. Then both the first and second terms on the right hand side
of \reff{long4} goes to zero as $n \tends \infty$. Therefore, for $n$ large enough and any $\delta^{'}>0$, in the limit $\eps \tends 0$,
we must have that
\be\tr(P_n^\gamma \rho_n^{AB}) \le \delta^{'},\ee
which in turn implies that $\gamma \le \uS(A|B)$. Hence, from the choice of the parameters $\alpha$ and $\gamma$ it follows that
\be
\lim_{\eps \tends 0} \underline{{\Upsilon}}^{\eps}(A|B) - \delta \le \underline{{{S}}}(A|B),\ee
and since $\delta$ is arbitrary, we obtain the inequality \reff{step24}.
\end{proof}

We next prove the bound
\be
\uS(A|B) \le \lim_{\eps \rightarrow 0} \liminf_{n \rightarrow \infty}
\frac{1}{n} H^\eps_{\min}(\rho_n^{AB}|\rho_n^{B})
\label{second6}
\ee
%\begin{proof}
{\em{Proof of \reff{second6}:}}
Let $\delta > 0$ be arbitrary but fixed. Then by the definition of the inf-spectral conditional entropy rate there exists $\gamma \in {\mathbb{R}}$ such that
\begin{equation}
  \gamma > \uS(A|B) - \delta
\end{equation}
and
\begin{equation}
  \limsup_{n \to \infty} \tr\bigl[\{\rho^{A B}_n \geq 2^{-n \gamma}  \id_n^A \otimes \rho_n^B\} \rho_n^{A B} \bigr] = 0 \ .
\end{equation}
In particular, for any $\eps > 0$ there exists $n_0 \in \bbN$ such that for all $n \geq n_0$.
\begin{multline}
  \tr\bigl[\{\rho^{A B}_n > 2^{-n \gamma} \cdot \id_n^A \otimes \rho_n^B\} \rho_n^{A B} \bigr] \\
\leq
  \tr\bigl[\{\rho^{A B}_n \geq 2^{-n \gamma} \cdot \id_n^A \otimes \rho_n^B\} \rho_n^{A B} \bigr]
<
  \frac{\eps^2}{8} \ .
\end{multline}

Using Lemma~\ref{lem:Hminepsprojbound} we then infer that for all $n \geq n_0$
\begin{equation}
  \Hmin^\eps(\rho_n^{A B} | \rho_n^B)
\geq
  n \gamma
\end{equation}
and, hence
\begin{equation}
  \liminf_{n \to \infty} \frac{1}{n} \Hmin^\eps(\rho_n^{A B} | \rho_n^B)
\geq
  \gamma \ .
\end{equation}
Because this holds for any $\eps > 0$, we conclude
\begin{equation}
  \lim_{\eps \to 0} \liminf_{n \to \infty} \frac{1}{n} \Hmin^\eps(\rho_n^{A B} | \rho_n^B)
\geq
  \gamma > \uS(A|B) - \delta \ .
\end{equation}
The assertion \reff{second6} then follows because this holds for any $\delta > 0$.
\end{proof}

\section{Conclusions}

So far, the \emph{information spectrum approach} and the \emph{smooth
  entropy framework} have been applied within pretty different
subfields of information theory~\footnote{In fact, the two approaches
  have been developed independently within different research
  communities.}.  In the quantum regime, spectral entropy rates have
mostly been used to characterize information sources, communication
channels and entanglement manipulations. In contrast, smooth entropies
proved useful in the context of randomness extraction and
cryptography. We hope that our result bridges the gap between these
two subfields.  In fact, for the study of asymptotic settings where the underlying
resources are available many times, both the information-spectrum
approach and the smooth entropy framework can be used equivalently.

\section{Acknowledgments} The authors are very grateful to Patrick
Hayden for stimulating exchanges, helpful comments and for carefully
reading the proofs. They also thank Masahito Hayashi, Robert K\"onig,
Jonathan Oppenheim and Andreas Winter for interesting discussions. ND
acknowledges the kind hospitality of McGill University
(Montreal), where part of this work was done.

\section{APPENDIX}

In this Appendix we give the proofs of Proposition \ref{equiv_sup} and Proposition \ref{equiv_inf}.

\section{{{{\em{\bf{Proof of Proposition 1}}}}}}

\begin{proof}
For any $\alpha = \overline{\mathcal{D}}(\rho\| \omega) + \delta$, with $\delta
> 0$, implies
\begin{align}
0 &= \lim_{n\rightarrow \infty} \mathrm{Tr}\big[ \{ \rho_n \geq
e^{n\alpha}\omega_n \} \rho_n \big] \nonumber \\
&\geq \lim_{n\rightarrow \infty} \mathrm{Tr}\big[ \{ \rho_n \geq
e^{n\alpha}\omega_n \} (\rho_n - e^{n\alpha}\omega_n) \big] \nonumber \\
&\geq 0
\end{align}
giving $\overline{\mathcal{D}}(\rho\| \omega) \geq \overline{D}(\rho\|
\omega)$, as $\delta$ is arbitrary.  For the converse we assume that the
inequality is strict, such that $\overline{\mathcal{D}}(\rho\| \omega) =
\overline{D}(\rho\| \omega) + 4\delta$ for some $\delta > 0$.  Then choosing
$\alpha = \overline{D}(\rho\| \omega) + 2\delta$, $\gamma = \overline{D}(\rho\|
\omega) + \delta$, we have from Lemma \ref{lem1},
\begin{align}
\mathrm{Tr}\big[ \{ \rho_n \geq e^{n\alpha}\omega_n \} \rho_n \big] &\leq
\mathrm{Tr}\big[ \{ \rho_n \geq e^{n\gamma}\omega_n \} (\rho_n -
e^{n\gamma}\omega_n) \big] \nonumber \\
&\phantom{=}\: + e^{n\gamma}\mathrm{Tr}\big[ \{ \rho_n \geq e^{n\alpha}\omega_n
\} \omega_n \big] \nonumber \\
&\leq \eps_n + e^{-n\delta}
\end{align}
where $\eps_n = \mathrm{Tr}\big[ \{ \rho_n \geq e^{n\gamma}\omega_n \}
(\rho_n - e^{n\gamma}\omega_n) \big]$ and $\mathrm{Tr}\big[ \{ \rho_n \geq
e^{n\alpha}\omega_n \}\omega_n \big] \leq e^{-n\alpha}$ holds for any $\alpha$.
As the right hand side goes to zero asymptotically and since $\alpha <
\overline{\mathcal{D}}(\rho\| \omega)$ we have a contradiction.
\end{proof}

\section{{{{\em{\bf{Proof of Proposition 2}}}}}}

\begin{proof}
For any $\alpha = \underline{D}(\rho\| \omega) - \delta$, with $\delta > 0$,
implies
\begin{align}
1 &\geq \lim_{n\rightarrow \infty} \mathrm{Tr}\big[ \{ \rho_n \geq
e^{n\alpha}\omega_n \} \rho_n \big] \nonumber \\
&\geq \lim_{n\rightarrow \infty} \mathrm{Tr}\big[ \{ \rho_n \geq
e^{n\alpha}\omega_n \} (\rho_n - e^{n\alpha}\omega_n) \big] \nonumber \\
&= 1
\end{align}
giving $\underline{\mathcal{D}}(\rho\| \omega) \geq \underline{D}(\rho\|
\omega)$, as $\delta$ is arbitrary.  For the converse we assume that the
inequality is strict, such that $\underline{\mathcal{D}}(\rho\| \omega) =
\underline{D}(\rho\| \omega) + 4\delta$ for some $\delta > 0$.  Then choosing
$\alpha = \underline{\mathcal{D}}(\rho\| \omega) - \delta$, $\gamma =
\underline{\mathcal{D}}(\rho\| \omega) - 2\delta$, we have from Lemma
\ref{lem1},
\begin{align}
1 &\overset{n\rightarrow \infty}\leftarrow \mathrm{Tr}\big[ \{ \rho_n \geq
e^{n\alpha}\omega_n \} \rho_n \big] \nonumber \\
&\leq \mathrm{Tr}\big[ \{ \rho_n \geq e^{n\gamma}\omega_n \} (\rho_n -
e^{n\gamma}\omega_n) \big] \nonumber \\
&\phantom{=}\: + e^{n\gamma}\mathrm{Tr}\big[ \{ \rho_n \geq e^{n\alpha}\omega_n
\} \omega_n \big] \nonumber \\
&\leq \mathrm{Tr}\big[ \{ \rho_n \geq e^{n\gamma}\omega_n \} (\rho_n -
e^{n\gamma}\omega_n) \big] + e^{-n\delta}
\end{align}
where $\mathrm{Tr}\big[ \{ \rho_n \geq e^{n\alpha}\omega_n \}\omega_n \big]
\leq e^{-n\alpha}$ holds for any $\alpha$. Thus $\lim_{n\rightarrow \infty}
\mathrm{Tr}\big[ \{ \rho_n \geq e^{n\gamma}\omega_n \} (\rho_n -
e^{n\gamma}\omega_n) \big] = 1$, where $\gamma > \underline{D}(\rho\|\omega)$,
which is a contradiction.
\end{proof}

\end{document}